\documentclass[reprint,
superscriptaddress,
nofootinbib,
 amsmath,amssymb,
 aps,
]{revtex4-2}

\usepackage[english]{babel}
\usepackage[utf8]{inputenc}
\usepackage[T1]{fontenc}
\usepackage{times}
\usepackage{graphicx}
\usepackage{dcolumn}
\usepackage{comment}
\usepackage{bm}
\usepackage{booktabs}
\usepackage{color}
\usepackage{mathtools}
\usepackage{physics}
\usepackage{cancel}
\usepackage{braket}

\usepackage{amsthm}

\newtheorem{lemma}{Lemma}

\usepackage{tabularray}

\usepackage{array}
\newcolumntype{M}[1]{>{\centering\arraybackslash}m{#1}}
\newcolumntype{P}[1]{>{\centering\arraybackslash}p{#1}}

\usepackage{ragged2e} 
\usepackage{caption}
\usepackage[position=top]{subcaption}
\DeclareCaptionJustification{justified}{\justifying}
\captionsetup[subfigure]{
    font=small, 
    singlelinecheck = false, 
    justification=raggedright
}
\captionsetup[sub]{font=small,justification=justified}

\usepackage[normalem]{ulem}
\usepackage[export]{adjustbox}

\bibliographystyle{apsrev4-2}

\usepackage{pgfplots}
\usepgfplotslibrary{groupplots,dateplot}
\usetikzlibrary{patterns,shapes.arrows}
\pgfplotsset{
    compat = newest,
    legend style = {font = \scriptsize},
    label style = {font = \footnotesize},
    tick label style = {font = \scriptsize},
    width = 0.99\columnwidth,
    /tikz/mark size=1.5pt,
}

\usetikzlibrary{external}
\tikzexternalize[prefix=tikz/, mode=list and make]

\usepackage{hyperref}
\hypersetup{
    colorlinks,
    linkcolor={violet},
    citecolor={teal},
    urlcolor={teal}
}

\newcommand{\twoOrd}[1]{\, \mathrel{:} \mathrel{#1} \mathrel{:} \,}
\newcommand{\threeOrd}[1]{\mathrel{\lower.04em\hbox{\rlap{$\cdot$}}{:}} \, \mathrel{#1} \, \mathrel{\lower.04em\hbox{\rlap{$\cdot$}}{:}}}

\newcommand{\FUB}{Dahlem Center for Complex Quantum Systems and Institut für Theoretische Physik, Freie Universität Berlin, Arnimallee 14, 14195 Berlin, Germany}

\newcommand{\HZB}{Helmholtz-Zentrum Berlin für Materialien und Energie, Hahn-Meitner-Platz 1, 14109 Berlin, Germany}

\newcommand{\UOC}{Institute of Theoretical and Computational Physics, Department of Physics, University of Crete, 71003 Heraklion, Greece}

\def\rescale{0.2}

\begin{document}

\newcommand{\TNQF}{HTTN}

\title{Hamiltonian truncation tensor networks for quantum field theories}

\author{Philipp Schmoll}
\affiliation{\FUB}

\author{Jan Naumann}
\affiliation{\FUB}

\author{Alexander Nietner}
\affiliation{\FUB}

\author{Jens Eisert}
\affiliation{\FUB}
\affiliation{\HZB}

\author{Spyros Sotiriadis}
\affiliation{\UOC}
\affiliation{\FUB}

\date{\today}

\begin{abstract}

Understanding the equilibrium properties and out of equilibrium dynamics of quantum field theories are key aspects of fundamental problems in theoretical particle physics and cosmology. 
However, their classical simulation is highly challenging. 
In this work, we introduce a tensor network method for the classical simulation of continuous quantum field theories that is suitable for the study of low-energy eigenstates and out-of-equilibrium time evolution. 
The method is built on Hamiltonian truncation and tensor network techniques, bridging the gap between two successful approaches. 
One of the key developments is the exact construction of matrix product state representations of global projectors, crucial for the implementation of interacting theories. 
Despite featuring a relatively high computational effort, our method dramatically improves predictive precision compared to exact diagonalisation-based Hamiltonian truncation, allowing the study of so far unexplored parameter regimes and dynamical effects. 
We corroborate trust in the accuracy of the method by comparing it with exact theoretical results for ground state properties of the sine-Gordon model. 
We then proceed with discussing $(1+1)$-dimensional quantum electrodynamics, the massive Schwinger model, for which we accurately locate its critical point and study the growth and saturation of momentum-space entanglement in sudden quenches. 

\end{abstract}

\maketitle  

The discovery of \emph{tensor network} (TN) methods for the efficient classical simulation of quantum many-body models and their dynamics has revolutionised the study of condensed matter physics. 
Quantum many-body systems suffer from the infamous `curse of dimensionality' problem: their Hilbert space dimension increases exponentially fast with the number of particles, which makes the computational problem of finding their ground state or simulating their dynamics quickly intractable by exact methods. 
However, due to the short ranged nature of the interactions and low entanglement in typical condensed matter models, representing physically relevant eigenstates of the Hamiltonian of a model as \emph{matrix product states} (MPS)~\cite{Orus2019,RevModPhys.93.045003} (also called \emph{tensor trains}), 
i.e., chains of comparably small sized tensors to be contracted with each other, allows a dramatic reduction of the computational cost of their simulation.

A class of physically important models that have benefited to a lesser extent from the TN revolution yet are \emph{quantum field theory} (QFT) models. 
QFT is the foundation of particle physics and the basis upon which theories of the early universe dynamics or quantum information in black holes are built. 
QFTs are inherently many-body models, therefore suffering from the same `curse of dimensionality' problem as condensed matter models. 
But beyond that, their simulation poses an additional challenge: 
being continuous, they inevitably live in an infinite dimensional Hilbert space, meaning that numerical simulations would resort to some kind of truncation. This is even true for methods such as \emph{continuous matrix product states}~\cite{PhysRevLett.104.190405,PhysRevLett.105.260401,PhysRevX.9.021040} that readily incorporate the continuum in space. 
The validity and efficiency of a simulation technique relies strongly on the choice of truncation criterion, which must guarantee convergence to the correct target value and should ideally be generic. 

Previous attempts to classically simulate QFTs include instances of lattice discretisation~\cite{lattice_1b,PhysRevD.88.085030}, \emph{Hamiltonian truncation} 
(HT) methods~\cite{YUROV1990,Hogervorst2014}, and continuous tensor network methods~\cite{PhysRevLett.104.190405,PhysRevLett.105.260401,PhysRevX.9.021040}. 
Lattice discretisation has allowed the application of TN-based techniques available for lattice models (DMRG, TEBD, TDVP)~\cite{lattice_2, Banuls2019,PhysRevD.88.085030}. 
However, certain aspects of QFTs are fundamentally linked to their continuous nature making it harder, if not impossible, to simulate by means of lattice models. 
A notable artefact of lattice discretisation is the infamous fermion doubling problem~\cite{nielsen_no-go_1981}, due to which the discretised model has radically different symmetry properties compared to the original continuous QFT. 
These differences may be cured or neglected in the study of low energy physics (for example by explicitly breaking lattice symmetries or introducing non-local corrections), but may become significant in far from equilibrium problems. 
The introduction of continuous MPS demonstrated the capacity of TNs to accommodate continuous fields, however, its formulation makes it less straightforward to use for the simulation of QFT dynamics. 
On the other hand, HT methods operate directly on the continuum definition of QFT and are based on \emph{renormalisation group} (RG) theory to ensure truncation convergence~\cite{Hogervorst2014, james_non-perturbative_2018}. 
However, at the technical level they rely on exact diagonalisation or sparse matrix techniques and cannot address the crucial issue of exponential Hilbert space growth. 
As a result, their validity is limited to a small part of parameter space and their accuracy is relatively low. 
A naturally arising question is if the virtues of TNs can be combined with those of HT in a unified method and if so, how precisely this can be done. 

{\it Main aim of this work.}
To bridge the gap between two main approaches and to bring the best of both together, in this work, we introduce a TN method for the simulation of QFTs that we denote \TNQF. 
It is built upon ideas of HT but using \emph{matrix product operator} (MPO) representations of interactions in momentum space, thus avoiding the issues of lattice discretisation and at the same time reducing significantly the computational cost of simulation compared to exact diagonalisation. 

\begin{figure*}[ht]
    \begin{minipage}[b]{.35\textwidth}
        \begin{subfigure}{\textwidth}
            \subcaption{}
            \includegraphics[width = \linewidth]{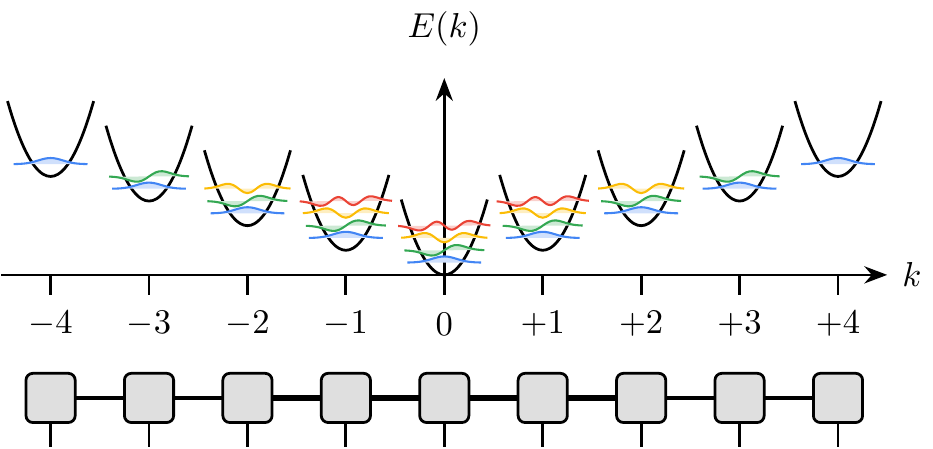}
            \label{fig:subfig11}
       \end{subfigure}
       \par
       \begin{subfigure}{\textwidth}
            \subcaption{}
            \includegraphics[width = \linewidth]{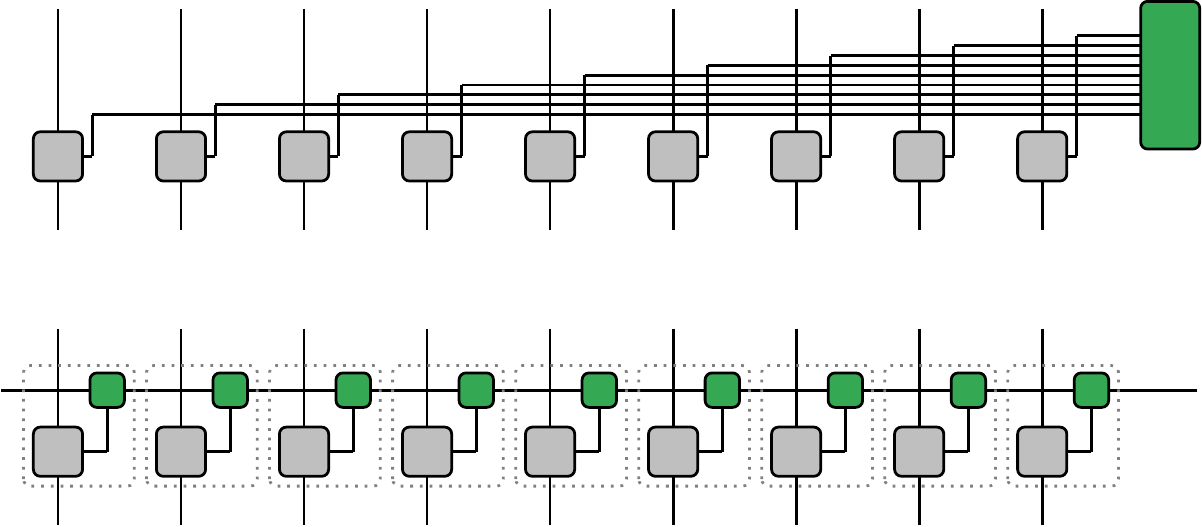}
            \label{fig:subfig12}
       \end{subfigure}
    \end{minipage}
    \qquad
    \begin{minipage}[t]{.55\textwidth}
        \begin{subfigure}{\textwidth}
            \subcaption{}
            \includegraphics[width = \linewidth]{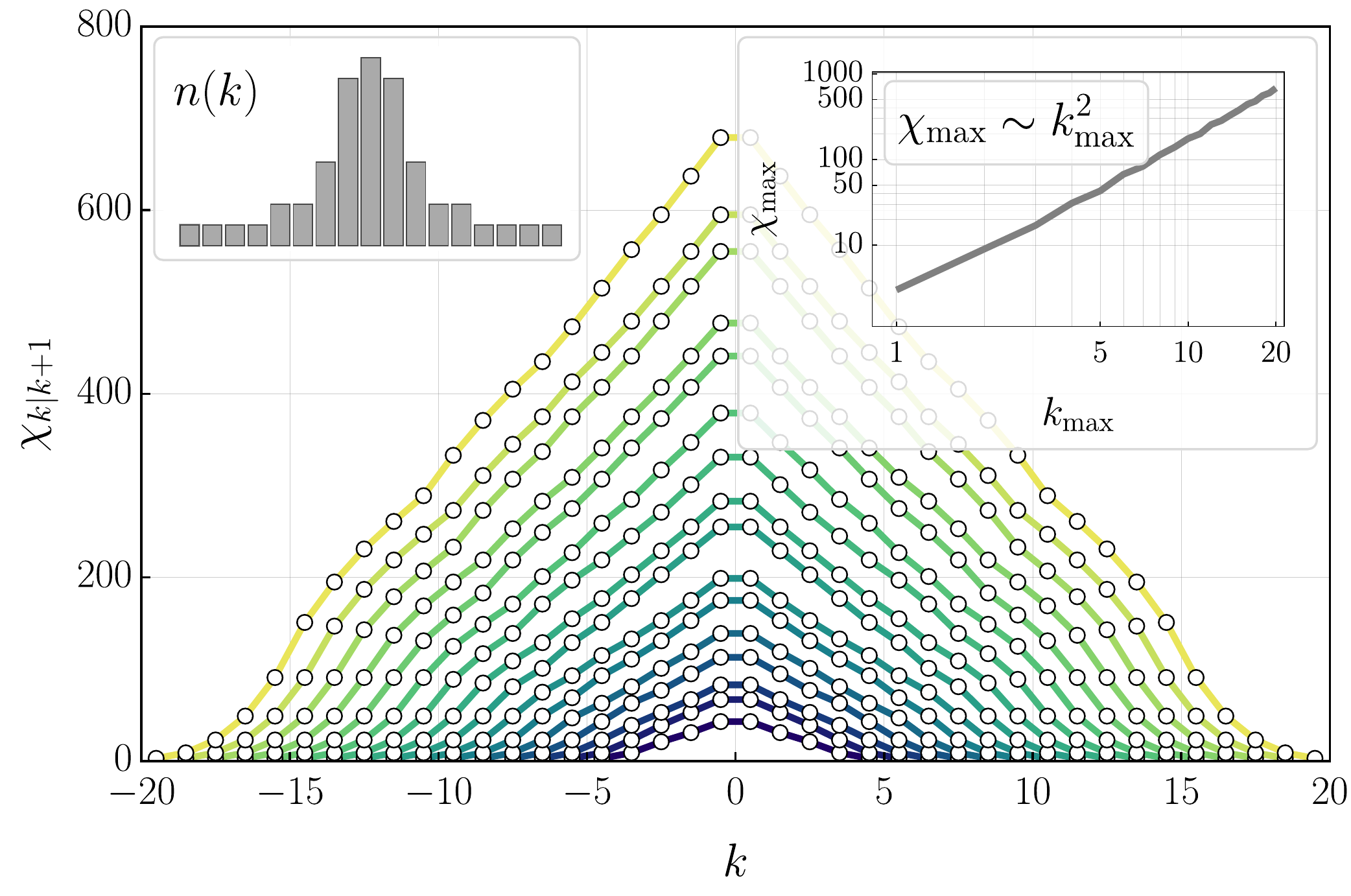}
            \label{fig:subfig13}
       \end{subfigure}
    \end{minipage}
    \caption{
    \textbf{Illustration of the tensor network.} 
    Following the ideas of HT, the target model's construction is built up on the basis of the free part of its Hamiltonian, therefore the Hilbert space is decomposed as the tensor product of those of the momentum modes, which correspond to independent harmonic oscillators (with the possible exception of the zero momentum mode). 
    (a) An MPS representation of quantum states is set up on this space, depicted as a chain of single mode tensors connected by horizontal links. 
    The non-contracted vertical tensor indices label the mode quantum numbers, which correspond to the Fock basis levels of the harmonic oscillators. 
    The contracted horizontal links (bonds) mediate quantum correlations between the modes and have variable dimension depending on the target state. 
    High-momentum modes are typically less occupied than low-momentum modes, prescribing a progressive truncation of the mode occupation levels. 
    (b) Graphical representation of the MPO representing the interacting part of the model's Hamiltonian. 
    The interaction potential is constructed out of products of imaginary exponential operators acting separately on each mode, represented by non-contracted single mode tensors (grey). 
    However, the conservation of total momentum imposes a global constraint on the mode quantum numbers that corresponds to contraction with a multi-index `Kronecker delta' tensor (green). 
    To obtain a regular MPO representation of the interaction operator, we recast this large tensor into an exact MPS form, which is then absorbed into the individual mode tensors. 
    (c) Scaling of the individual bond dimensions $\chi_{k|k+1}$ of the `Kronecker delta' MPS for various momentum cutoffs $k_\mathrm{max}$ and typical profiles of mode occupation $n(k)$ (left inset). 
    The maximal bond dimension $\chi_\mathrm{max}$ appears in the middle of the MPS and scales polynomially with $k_\mathrm{max}$ (right inset).}
    \label{fig:TN}
\end{figure*}

{\it Models considered as testing grounds.}
A testing ground for quantum many-body physics simulation methods is the class of $(1+1)$-dimensional models. 
This is related not only to the milder exponential growth of the Hilbert space size but also to the possibility of comparison with exact results provided by powerful analytical techniques available in certain $(1+1)d$ models, specifically, integrability~\cite{Zamo-Zamo_79} and bosonisation~\cite{Mandelstam,Coleman}. 
Two of the most interesting $(1+1)d$ QFT models are the \emph{sine-Gordon} (sG) and the \emph{massive Schwinger} (mS) model. 
The sG model is a bosonic QFT with a cosine self-interaction, defined by the Hamiltonian
\begin{align}
    H_\mathrm{sG} & = \int\left(\tfrac12 \Pi^{2} + \tfrac12 \left(\partial_{x}  \Phi\right)^{2} - \lambda \cos \beta \Phi \right) \, \mathrm{d} x
    \label{H_sG}
\end{align}
where $\Phi$ and $\Pi$ are canonically conjugate boson fields, and $\lambda$ and $\beta$ the interaction strength and frequency, respectively. 
The sG model is Bethe integrable~\cite{Zamo_sG_77,Zamo-Zamo_79} and reduces to the Dirac model of \emph{free fermions} (FF)~\cite{Mandelstam,Coleman} at $\beta = \sqrt{4\pi}$, which makes it ideal for testing the accuracy of simulation techniques. 
The mS model~\cite{schwinger_gauge_1962,coleman_more_1976} is the $(1+1)d$ version of \emph{quantum electrodynamics} (QED), perhaps the simplest model of a gauge theory, exhibiting confinement and string breaking. 
Originally defined in terms of a Dirac fermion field $\Psi$ of mass $m$ and electric charge $e$ interacting with the electromagnetic field $A_{\mu}$, it is equivalently expressed via bosonisation as a massive boson field with a cosine self-interaction~\cite{schwinger_gauge_1962,coleman_more_1976}
\begin{align}
  \begin{split}
    H_\mathrm{mS} & = \int\left(\tfrac12 \Pi^{2} + \tfrac12 \left(\partial_{x}  \Phi\right)^{2} + \tfrac12 M^2 \Phi^2 \right. \\ 
    & \qquad \qquad 
    \left. - \lambda \cos \left(\sqrt{4\pi} \Phi - \theta \right) \right) \, \mathrm{d} x,
    \label{H_mS}
  \end{split}
\end{align}
where the effective boson mass $M$ is proportional to the electric charge $e$ and the interaction strength $\lambda$ to the fermion mass $m$. 
The angular variable $\theta$ represents the background electric field. 

In HT, the free part of the model's Hamiltonians is used to define the Hilbert space basis and construct the interacting part. 
For the mS model the free part is a massive boson model, which in momentum space reduces to an infinite set of independent harmonic oscillator modes. 
For the sG model the free part is a massless boson model, which can be treated similarly, except for the zero momentum mode~\cite{SM}. 

{\it Setting up the method.}
Assuming periodic boundary conditions in a system of finite size $L$, the allowed momenta are all integer multiples of $2\pi/L$. 
Each momentum mode can be occupied by any number of bosons. 
The Hilbert space is therefore the tensor product of an infinite set of boson modes, labelled by the integer mode number $k$, each of which is spanned by states with any non-negative integer occupation number $n_k$. 
A natural truncation corresponds to restricting the maximum mode number $k_\mathrm{max}$, which plays the role of an \emph{ultra-violet} (UV) cutoff, and the maximum occupation numbers $n(k)$ for each of the modes. 
This space is suitable for setting up a TN method, where the truncated bosonic momentum modes play the role of TN `sites'. 
We model this system using the one-dimensional MPS, as shown in Fig.~\ref{fig:TN} (a). 
An MPS is a natural decomposition of the many-body state in this respect, where each harmonic oscillator with mode number $k$ and bosonic occupations $n_k = 0, 1, \dots, n(k)$ is represented by a single tensor. 
Entanglement in the system is mediated by virtual TN indices $\lbrace \alpha_k \rbrace$, which are the horizontal lines connecting adjacent sites in Fig.~\ref{fig:TN}(a). 
Concretely, the MPS setup in the truncated Hilbert space is then given by
\begin{align}
    \ket{\Psi_\mathrm{MPS}} = \sum_{\{n_k\}} \left( \sum_{\{\alpha_k\}}  \prod_{k} \, \lbrack \mathcal{M}^{(k)} \rbrack^{\alpha_{k+1}}_{\alpha_{k}, n_k} \right) \ket{\{n_k\}}
    \label{Psi_MPS}.
\end{align}
The site-dependent tensors $\mathcal M^{(k)}$ contain the variational parameters of the TN ansatz. 
Its accuracy is systematically controlled by the dimension of the virtual indices $\lbrace \alpha_k \rbrace$, i.e., by the so-called bond dimensions. 
Larger bond dimensions allow to capture more entanglement and enhance the MPS expressiveness, but also increase the computational cost and memory consumption of the representation.

The momentum modes are fully decoupled in the free model, however, the interaction couples them to each other in all possible ways. 
This is a crucial difference from typical TN applications in lattice models, where the interactions, being local, couple only neighbouring sites. 
Based on this observation we can already anticipate that the interaction-induced entanglement between the momentum modes may not be bounded in the same beneficial way as in local lattice models~\cite{Area_laws,CiracApproximability2008}. 
Given that entanglement is the limiting resource of TN methods, the number of maximal modes that can be simulated by our method is expected to be restricted much more than in lattice DMRG applications. 

Indeed, one step in the construction of the TN representation of the interacting Hamiltonian has a particularly high computational cost: 
The conservation of total momentum $P$, which enters as a condition imposed by the translation invariance symmetry of the interaction Hamiltonian. 
This condition translates into a combinatorial problem for the allowed interaction terms, which corresponds to a global symmetry constraint. 
In the framework of TNs, global symmetries can be implemented using \emph{symmetric tensor networks} (STN)~\cite{STN_0, STN}, a special type of TN that keeps track of the conserved charge associated with the symmetry at the level of each constituent tensor. 
Using STN we have analytically constructed exact TN representations of projectors onto global symmetry sectors, based on which the interacting Hamiltonian can be finally written in MPO form (see Fig.~\ref{fig:TN}(b)). 
As anticipated above, however, the MPO bond dimension increases quickly (polynomially) with the maximum mode number $k_\mathrm{max}$, as shown in Fig.~\ref{fig:TN}(c).\\

\begin{figure}
    \captionsetup[subfigure]{font=small, justification=justified, singlelinecheck = false, skip=0pt}
    \centering
    \subcaptionbox{}
    {
    \includegraphics[width = .95\columnwidth, valign = t]{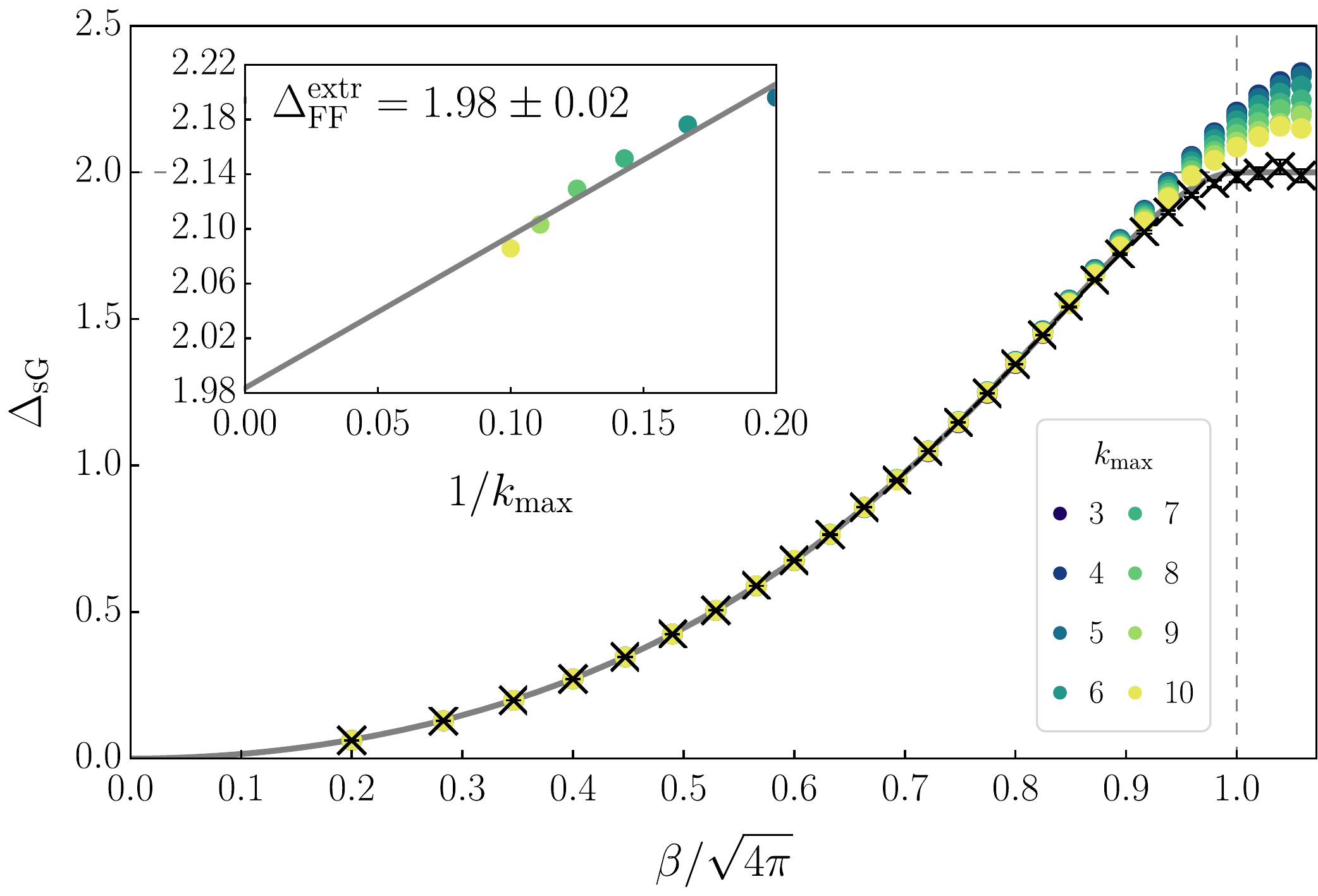} \label{fig:subfig21}
    }
    \subcaptionbox{}
    {
    \includegraphics[width = .95\columnwidth, valign = t]{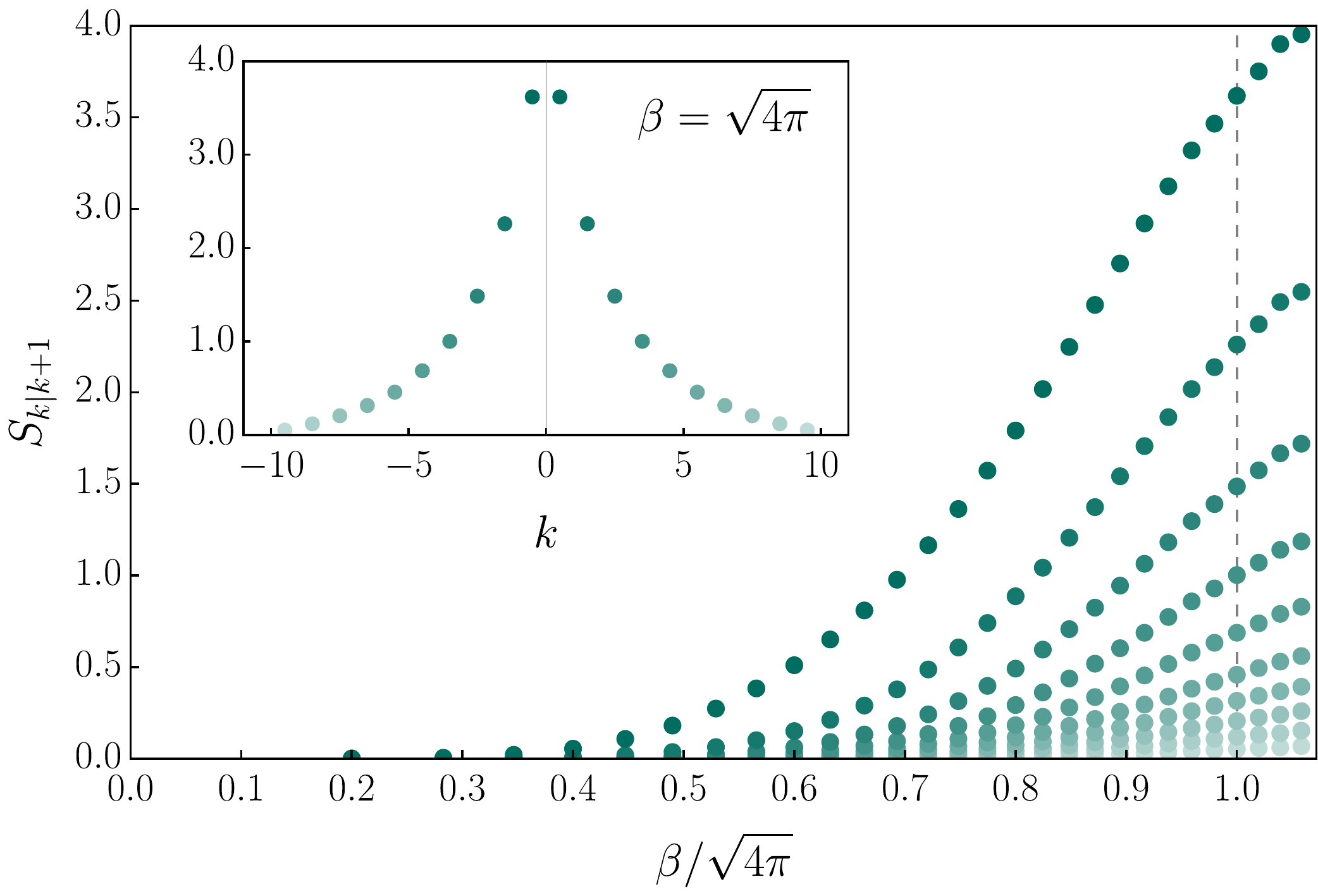}
        \label{fig:subfig22}
    }
    \caption{
    \textbf{Ground state simulations of the sine-Gordon model.} 
    (a) Energy gap as a function of the frequency $\beta$ of the cosine interaction. 
    We compute the energy gap as the energy difference between the ground and first excited state in the thermodynamic limit, at different cutoffs $k_\mathrm{max}$ for a sufficiently large finite volume ($L = 15$). 
    Finite size corrections decay exponentially with $L$ and are negligible at this size, whereas corrections due to the finite momentum cutoff are significant, especially at the FF point and beyond in the repulsive regime, $\beta \geq \sqrt{4\pi}$. 
    The theoretical prediction (grey solid line) is based on integrability~\cite{ZAMOLODCHIKOV1995}. 
    For $\beta \geq \sqrt{4\pi}$ the exact ground state energy diverges with the momentum cutoff but the energy gap is finite and exactly equal to $\Delta_\mathrm{sG} = 2$. 
    Due to this UV divergence a numerical estimation of the energy gap is only possible through extrapolation. 
    Our numerical estimates (`$\cross$' markers) are derived by extrapolation of the results to the limit $k_\mathrm{max}\to\infty$. 
    The numerical predictions so-derived agree very well with the theoretical ones, not only at the FF point but even beyond that. 
    The inset shows the extrapolation at the FF point based on a linear fit in $1/k_\mathrm{max}$~\cite{SM}. 
    (b) Entanglement entropy $S_{k|k+1}$ for successive bi-partitions in the MPS as a function of the frequency $\beta$ (at $k_\mathrm{max} = 10$). 
    The growth of $S_{k|k+1}$ with $\beta$ reflects the increasing complexity of the simulations. 
    The inset shows the entanglement entropy for all possible MPS cuts at the FF point.}
    \label{fig:sG}
\end{figure}

{\it Results for quantum field simulations.} 
Having constructed MPO representations of the Hamiltonians of the two {QFT} models enables the use of standard TN techniques for their simulation. 
We first study equilibrium properties. 
For the sG model the ground state energy density $\varepsilon_\mathrm{sG}$ and energy gap $\Delta_\mathrm{sG}$ are analytically known functions of $\beta$ in the thermodynamic limit~\cite{ZAMOLODCHIKOV1995}. 
Small $\beta$ or $L$ are easily accessible by standard HT but approaching the FF limit $\beta\to\sqrt{4\pi}$ in combination with $L\to\infty$ becomes increasingly difficult~\cite{TCSA-sG1,TCSA-sG2}. 
At the FF point and beyond, $\varepsilon_\mathrm{sG}$ diverges but energy level differences converge, so that an estimation of the energy gap is still possible. 
For the mS model analytical results are only available at the massless fermion point $m=0$, which is mapped to the free massive boson model. 
One of the most studied aspects of the model is its critical behaviour: varying $m$ along the line $\theta=\pi$, a second order phase transition occurs at the critical point $m=m_\mathrm{c}$ and for $m>m_\mathrm{c}$ chiral symmetry is spontaneously broken. 
A HT study of the mS model has been successfully carried out in Ref.~\cite{Kukuljan_MS}, however, convergence was achieved only for small values of $m$ well below the critical point. 
An accurate computation of the sG energy gap at the FF point $\Delta_\mathrm{FF}$ and of the mS critical point $m_\mathrm{c}$ are therefore two milestones for our method. 

As shown in Fig.~\ref{fig:sG} and Fig.~\ref{fig:mS}, these objectives can be easily achieved even at relatively low truncation cutoffs $k_\mathrm{max}$. 
Our best estimates are based on extrapolation of $k_\mathrm{max}\to\infty$. 
Beyond that we can successfully compute ground state expectation values of local (in coordinate space) observables, which provide valuable information about the two models, e.g., the mS phase diagram as shown in Fig.~\ref{fig:mS}. 
Naturally, as a byproduct of the simulations, our method provides also results for the \emph{momentum-space entanglement entropy}~\cite{momentum-space_EE_QFT}. 
Its dependence on the partitioning shows that it increases rapidly when the splitting position approaches the middle of the MPS chain, as shown in Fig.~\ref{fig:sG} for the sG model. 
\begin{figure*}[p]
    \captionsetup[subfigure]{font=small, justification=justified, singlelinecheck = false, skip=0pt}
    \centering
    \begin{tabular}{@{}c@{}c@{}c@{}}
    \subcaptionbox*{(a)}
    {
        \includegraphics[height = \rescale\paperheight, valign=t]{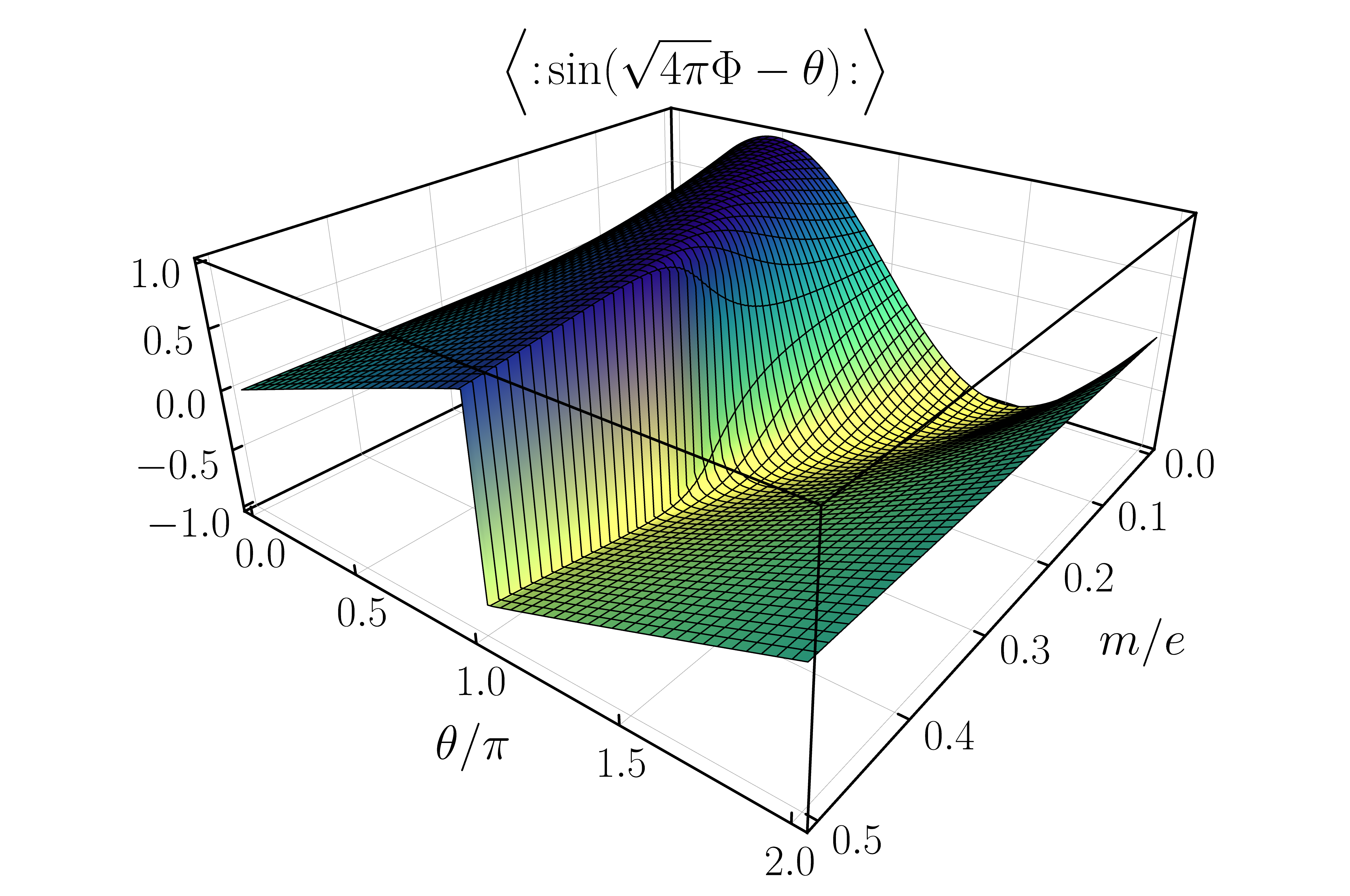}
    }
    & \hspace{2cm} &
    \subcaptionbox*{(d)}
    {
        \includegraphics[height = \rescale\paperheight]{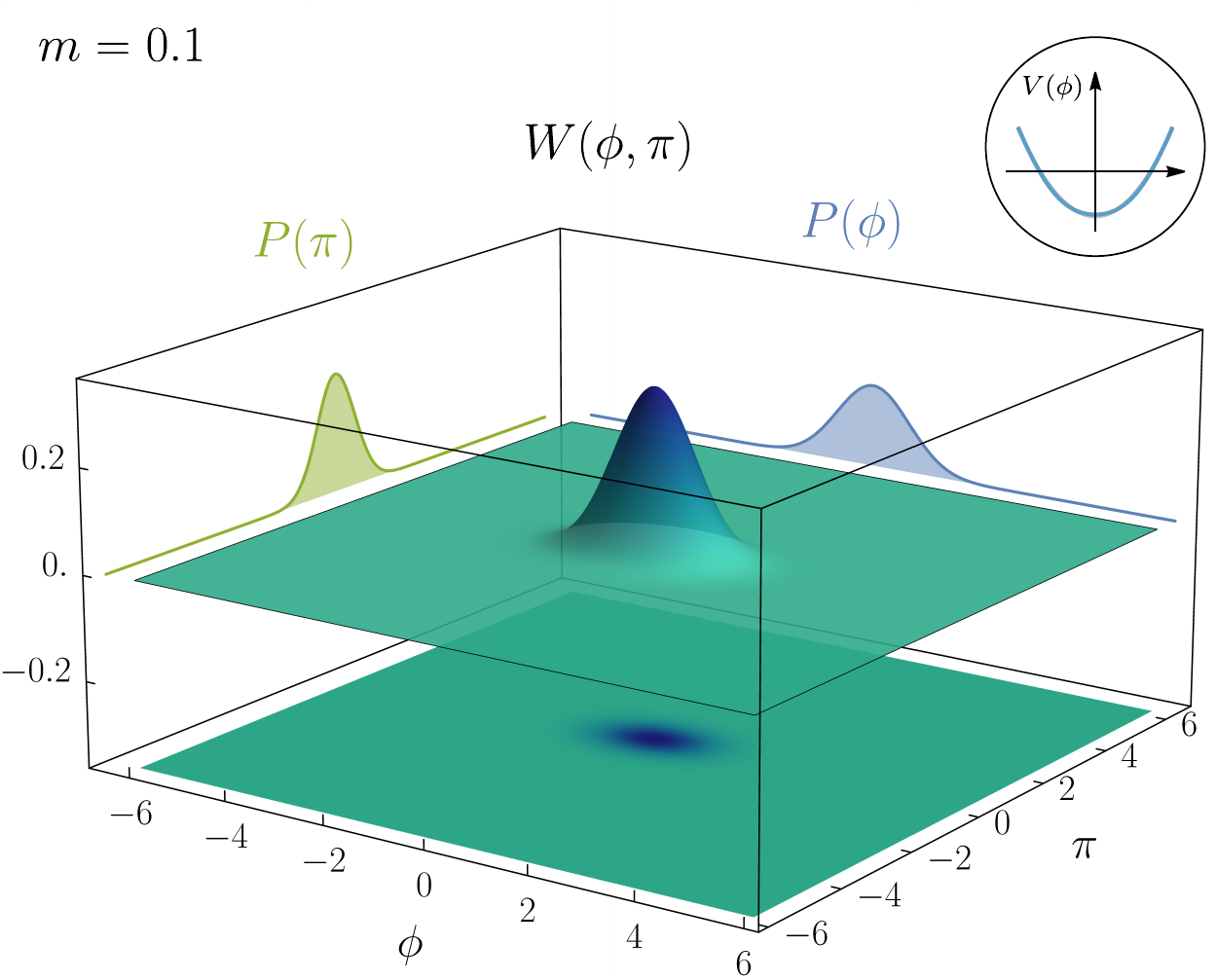}
    } \\
    & \\[-1.88ex]
    \subcaptionbox*{(b)}
    {
        \includegraphics[height = \rescale\paperheight]{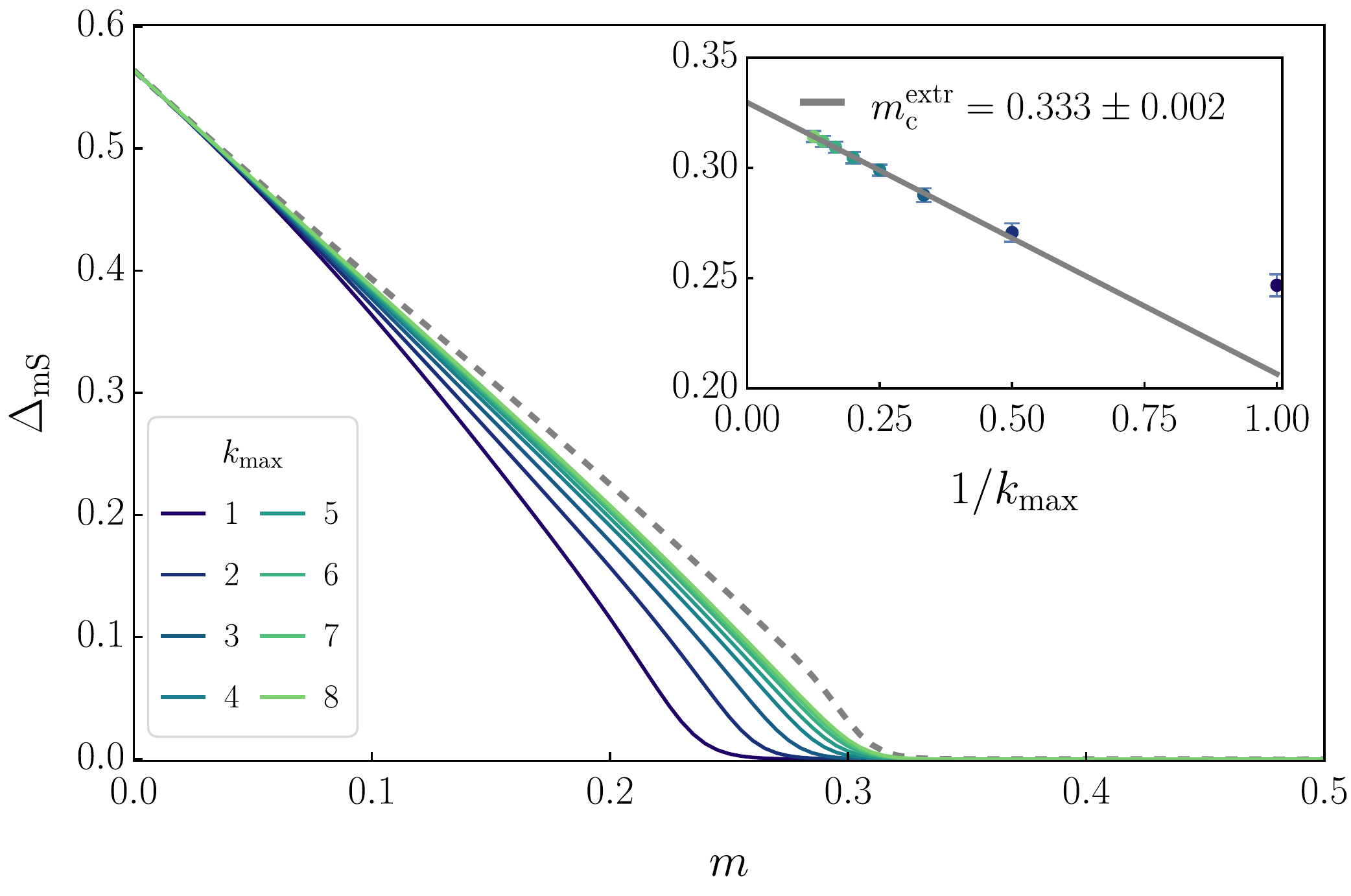}
    } 
    &  &
    \subcaptionbox*{(e)}
    {
        \includegraphics[height = \rescale\paperheight]{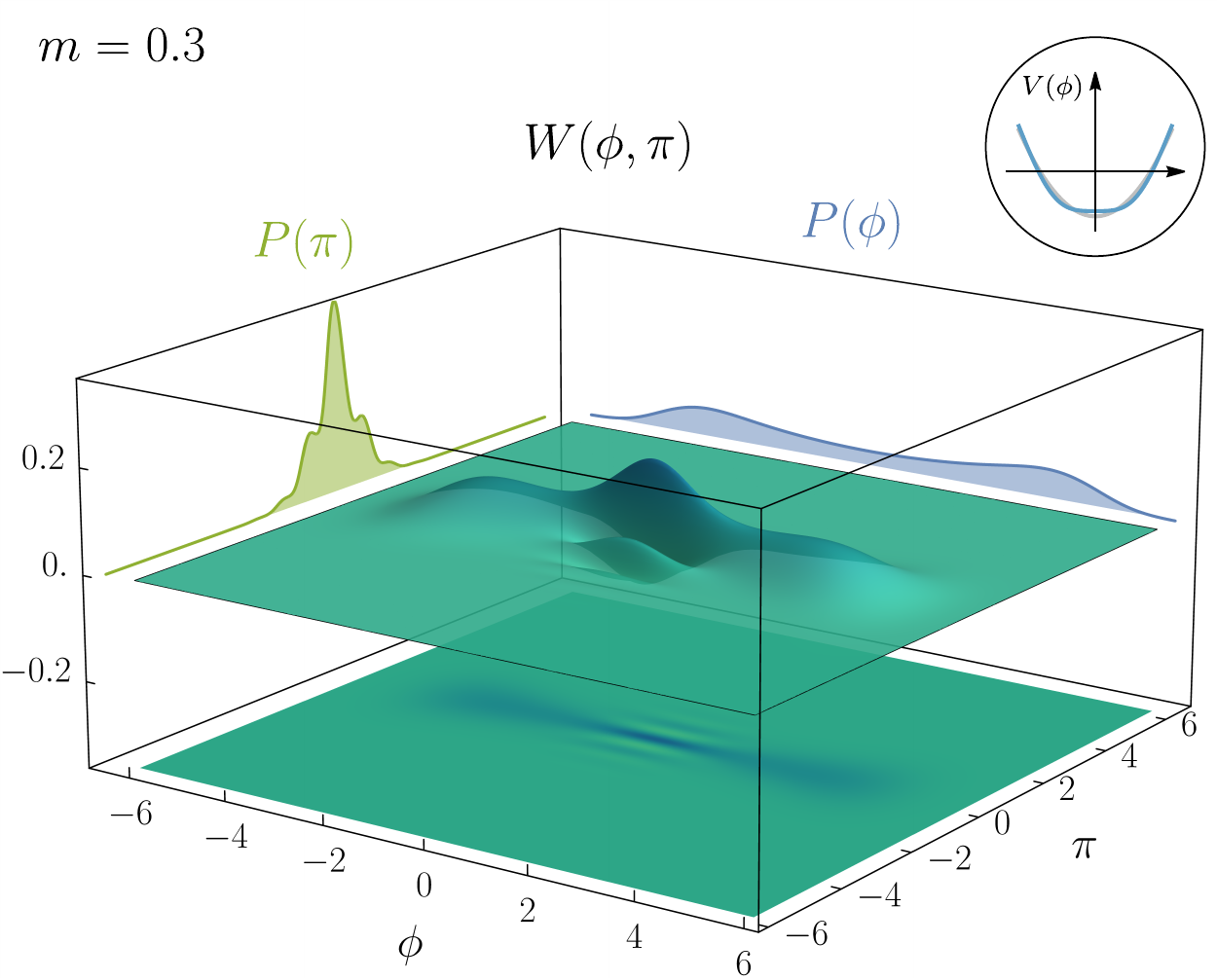}
    } \\
    & \\[-1.88ex]
    \subcaptionbox*{(c)}
    {
        \includegraphics[height = \rescale\paperheight]{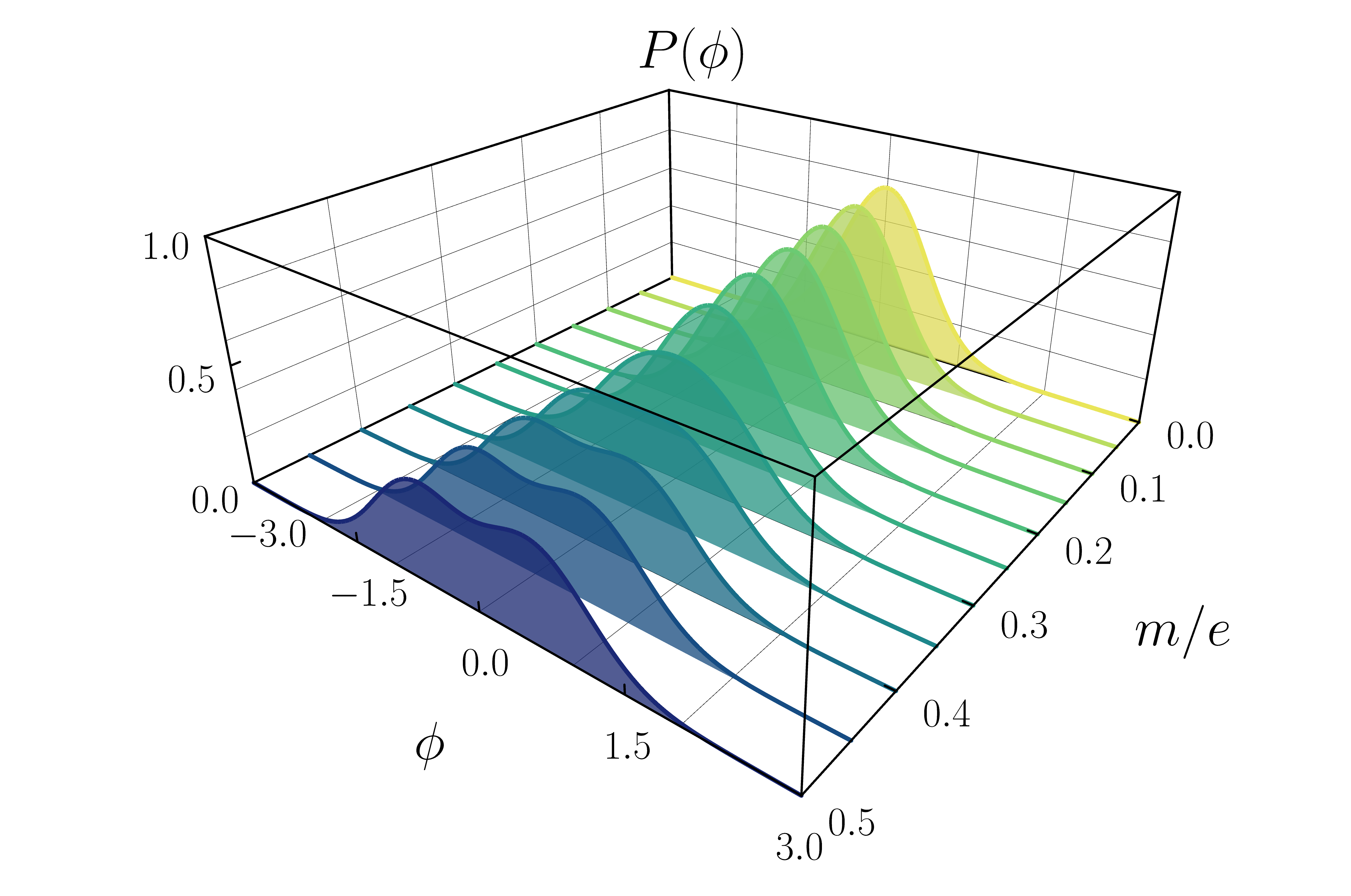}
    } 
    &  &
    \subcaptionbox*{(f)}
    {
    \includegraphics[height = \rescale\paperheight]{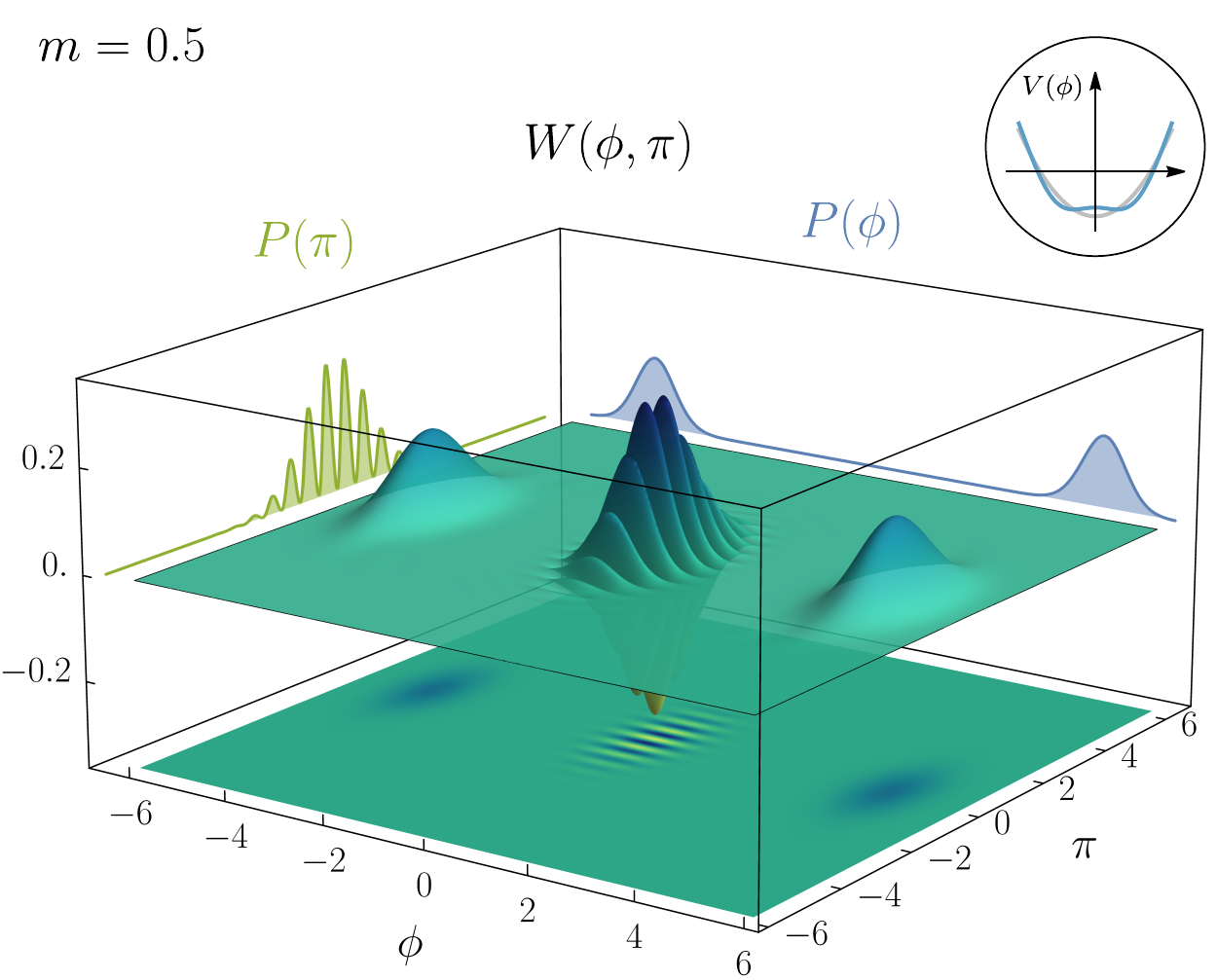}
    } 
    \end{tabular}
    \caption{\textbf{Ground state simulations of the massive Schwinger model.} 
    (a) Order parameter (ground state expectation value of $\twoOrd{\sin({\sqrt{4\pi}\Phi - \theta})}$, that is proportional to $\braket{\bar\Psi \gamma_5 \Psi}$~\cite{iso_hamiltonian_1990}) as a function of the fermion mass $m$ and the background electric field parameter $\theta$ at $k_\text{max} = 6$. 
    The phase transition across the line $\theta=\pi, m>m_c$ is indicated by the jump in the order parameter. 
    (b) Energy gap as a function of $m$ for different cutoffs $k_\mathrm{max}$ and estimation of the critical mass by extrapolation. 
    Our numerical result $m_\text{c}=0.333(2)$ agrees well with the earlier numerical estimate $m_\text{c} = 0.3335(2)$~\cite{mS_phase_transition}. 
    (c) Probability distribution (full counting statistics) of $\Phi$ in the ground state for various values of $m$ and (d) - (f) Wigner quasi-probability distribution $(\phi,\pi)\mapsto W(\phi,\pi)$ and corresponding marginal distributions $\phi\mapsto P(\phi)$ and $\pi\mapsto P(\pi)$ for the reduced density matrix of the zero momentum mode at $m/e=0.1$ (d), $0.3$ (e) and $0.5$ (f). 
    The insets show the semiclassical effective potential. 
    Below the critical point the Wigner function is close to a squeezed Gaussian distribution, whereas above it the presence of two peaks and the characteristic alternating sign features in-between indicate a quantum superposition of two degenerate states centred at the minima of the potential.}
    \label{fig:mS}
\end{figure*}

{\it Venturing into non-equilibrium problems.} 
More importantly, the \TNQF\ method can be applied in a straightforward way to the simulation of non-equilibrium dynamics. 
As an example, let us consider a quench of the fermion mass in the mS model starting from $m_0=0$, as shown in Fig.~\ref{fig:mS_quench_dynamics}. 
In the bosonised representation, this quench corresponds to starting from the free boson ground state and switching on the interaction. 
In this case momentum-space entanglement is zero in the initial state and subsequently grows with time. 
Quenching to $m<m_\text{c}$ results in oscillations typical of a mass quench~\cite{Calabrese_2007}, whereas for $m>m_\text{c}$, which corresponds to quenching across the critical point, a fast relaxation appears to emerge. 
This is visible in the dynamics of both expectation values of local observables and momentum-space entanglement entropies. 
The observation of relaxation in the simulation is nontrivial and relies crucially on the capability of the method to reach high truncated space dimensions. 

\begin{figure*}
    \centering
    \begin{subfigure}{0.45\textwidth}
        \subcaption{}
        \includegraphics[width = \linewidth]{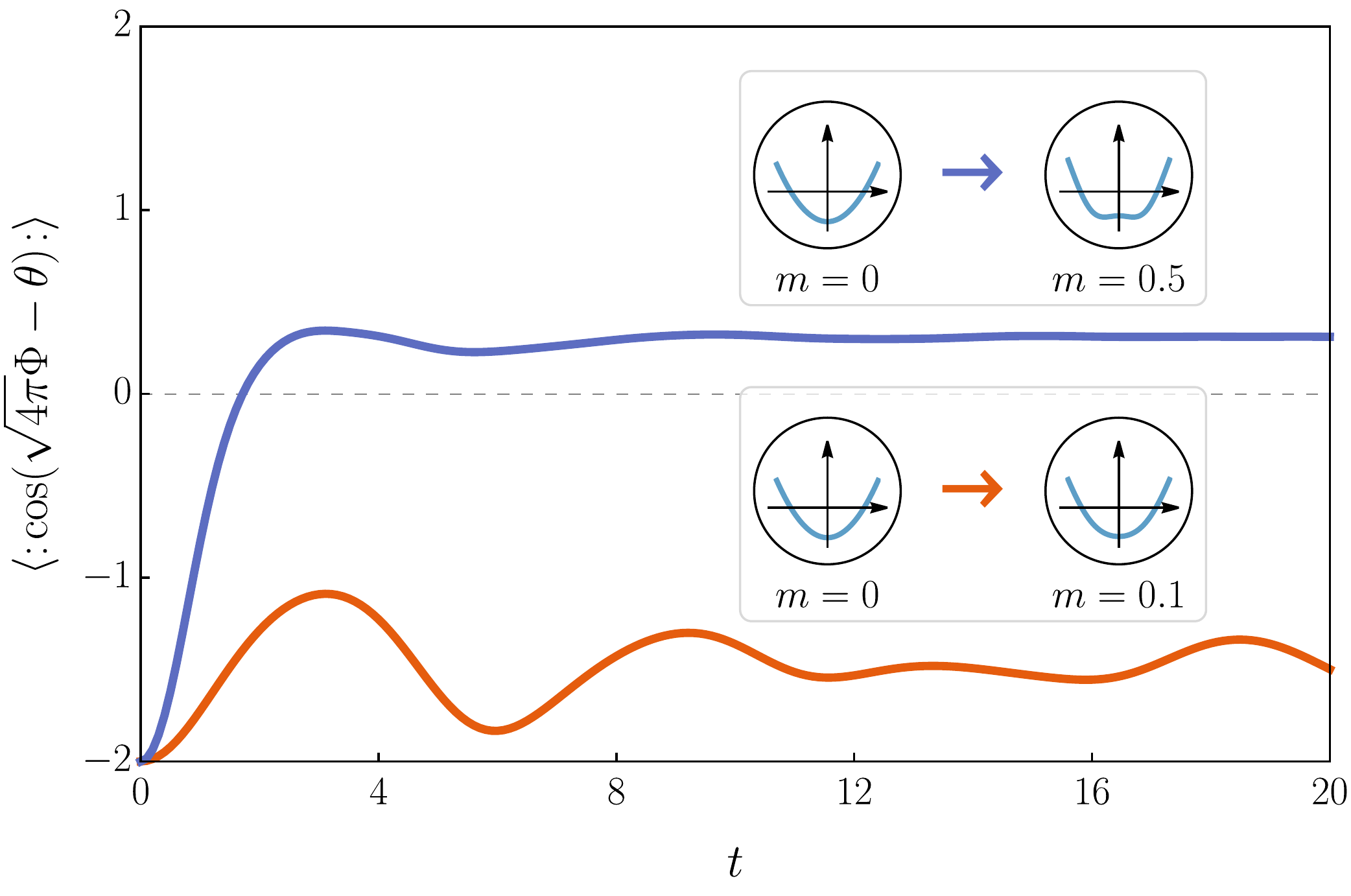}
        \label{fig:subfig41}
    \end{subfigure}
    \qquad
    \begin{subfigure}{0.45\textwidth}
        \subcaption{}
        \includegraphics[width = \linewidth]{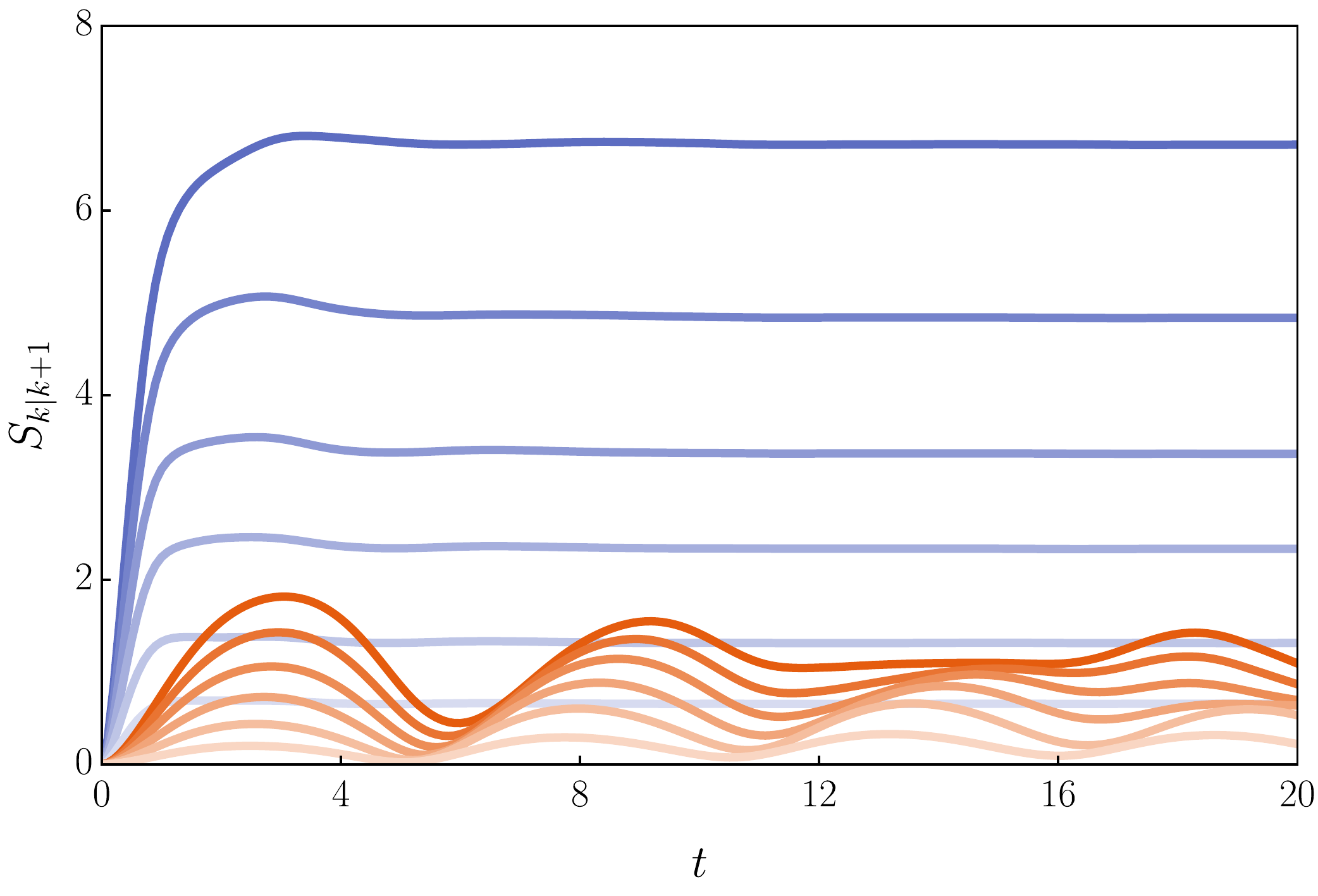}
        \label{fig:subfig42}
    \end{subfigure}
    \begin{subfigure}{0.95\textwidth}
        \subcaption{}
        \includegraphics[width = \linewidth]{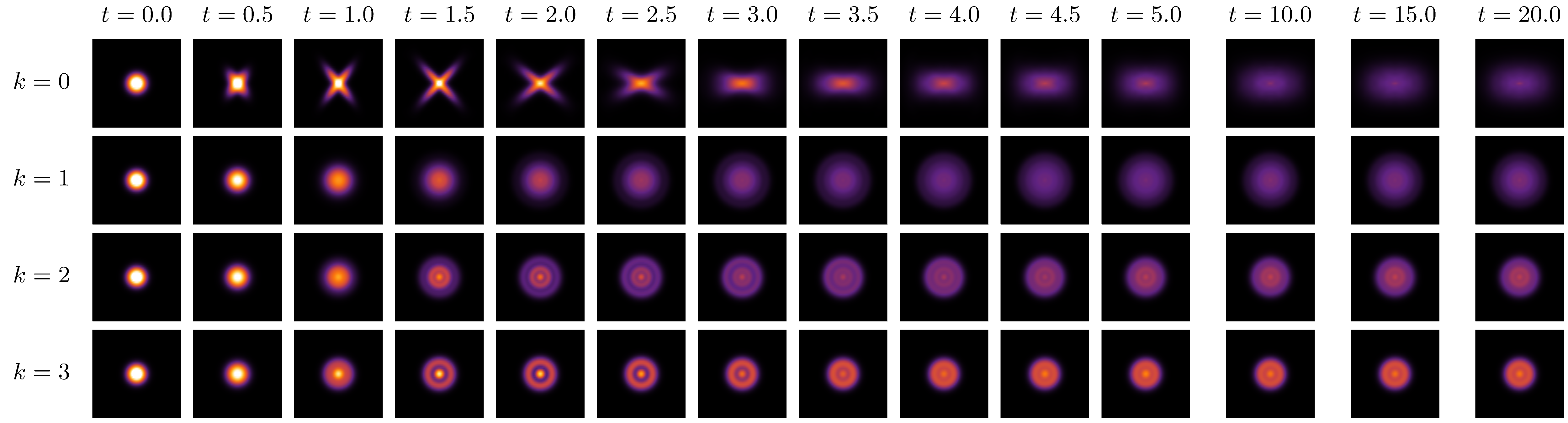}
        \label{fig:subfig43}
    \end{subfigure}
    \caption{\textbf{Simulations of out-of-equilibrium dynamics in the massive Schwinger model.} 
    Quench of the fermion mass from \mbox{$m_0 = 0$} to $m = 0.1e < m_c$ (case A) or $m = 0.5e > m_c$ (case B) at fixed $\theta=\pi$ and $k_\text{max} = 6$. 
    In the semiclassical approximation, the post-quench potential in case A is a weak perturbation of the pre-quench one and has only one minimum, whereas in case B it has two minima located at opposite values. 
    The initial state is a Gaussian centred at the middle. 
    (a) The subsequent dynamics of $\langle\twoOrd{\cos(\sqrt{4\pi}\Phi - \theta)}\rangle$ is markedly different in the two cases: instead of the oscillatory behaviour of case A, characteristic of a typical mass quench, in case B we observe a rapid approach to a limiting value, indicating relaxation. 
    (b) Entanglement entropy $S_{k|k+1}$ for successive bi-partitions in the MPS computed for $k = -6, -5 \hdots, -1$ in the two types of quench, distinguished by light to dark shades of orange (A) and blue (B). 
    The behaviour is similar to the cosine expectation value and in case B the entropy shows almost full saturation. 
    (c) Snapshots of the Wigner functions of the lowest modes for quench B at various different times reveal the complex underlying dynamics and the onset of relaxation.}
    \label{fig:mS_quench_dynamics}
\end{figure*}

{\it Full counting statistics.}
In addition to computing observables, our method can provide a detailed characterisation of the computed quantum states. 
The probability distribution of observables or what is called \emph{full counting statistics} {(FCS)} encodes valuable information about their fluctuations in a given state more transparently than expectation values and higher moments. 
Moreover, those quantities are directly accessible in and comparable to experimental measurements~\cite{FCS_exp_1,FCS_exp_2}. 
The Wigner quasi-probability function of each mode provides a phase-space representation of the state of a system, illustrating its quantum nature and contrasting it with semiclassical approximations. 
Knowledge of the initial state's Wigner functions can be used as input in simulations of the corresponding classical dynamics via the truncated Wigner approximation~\cite{truncated-Wigner_Polkovnikov}, allowing a precise comparison of classical and quantum dynamics in a given model. 
Fig.~\ref{fig:mS}(c) shows how increasing the fermion mass $m$ beyond $m_\mathrm{c}$ results in a bimodal shape for the FCS of the local field $\Phi
$. 
Studying the Wigner functions reveals that this is mainly due to the zero momentum mode adopting a highly non-Gaussian distribution. 

{\it Comparison with standard truncation methods.} 
The above results show that \TNQF\ is a significant improvement over 
the traditional exact diagonalisation-based HT method. 
This is especially clear in the mS model, not only from the accuracy of ground state results already at relatively low cutoffs and the extension of the accessible parameter space beyond earlier limitations, but also from the achievement of observation of the steady state after a quench. 
These improvements are an indication that the large number of higher occupied basis states that are included in the mode-truncated variational space of \TNQF\ contain relevant information that is inaccessible by the energy-cutoff based truncation of standard HT. 
On the other hand, exact diagonalisation limitations do not apply to variants of HT based on sparse matrix techniques~\cite{Rychkov2014} or exploiting the block structure of the Hamiltonian matrices~\cite{Horvath2022_CFTCSA}, and it would be interesting to compare our method with those in terms of accuracy and performance. 

One advantage of the \TNQF\ method is that the system size $L$ appears merely as a parameter in the MPO representation of the Hamiltonians and increasing it translates at worst to a relatively mild increase in the bond dimension both for ground state and dynamics simulations. 
For this reason, studying the thermodynamic limit does not pose a significant challenge for our method. 
On the contrary, increasing the cutoff $k_\mathrm{max}$ is far more challenging due to the rapidly growing bond dimension of the Hamiltonian MPO. 
Even at the relatively low cutoffs used in our simulations the maximum MPS bond dimension reaches values of the order of a thousand, much higher than in simulations of local lattice models but still moderate compared to TN applications in quantum chemistry. 
Nevertheless, increasing $k_\mathrm{max}$ beyond current levels is essential for certain physical applications (e.g., problems that involve spatial inhomogeneity), therefore future improvements of the method should focus on this bottleneck. 

{\it Potential for using mode transformations.}
Luckily, there is a large number of options for handling this problem. 
One of the advantages of the momentum-space formulation is the possibility of a low-cost optimisation of the computational basis by means of suitable bosonic \emph{mode transformations}~\cite{basis-optimisation_nonlocal_2,PhysRevB.104.075137}, which are known to improve tensor network methods for particle Hamiltonians. 
Assuming translation invariance, a Gaussian approximation of a quantum state can only have correlations between opposite momentum modes. 
Therefore, a significant reduction in computational effort can be achieved by reordering the modes into a sequence of opposite momentum pairs and using two-mode unitary transformations to switch to the basis where the target state's occupations are minimal. 
The number of optimisation parameters to be determined scales only linearly with the number of modes, hence the optimisation search is tractable. 
In view of the above, we anticipate that momentum-space entanglement is only moderate whenever a Gaussian approximation is satisfactory. 
In the context of out of equilibrium dynamics, in particular, this condition guarantees that momentum-space entanglement remains low at all times, enabling indefinitely long simulation times. 
Conversely, momentum-space entanglement is high, possibly unbounded, in the case of strongly correlated states that do not admit a Gaussian approximation. 
In our mS simulations this interpretation is strongly suggested by the relatively high entanglement entropy at large times after quenching across the critical point. 

{\it Perspectives for MPO compression.}
Apart from mode reordering and mode transformation optimisation, alternative options for improving \TNQF's efficiency include the exploitation of additional symmetries ($Z_2$ field parity symmetry), use of different types of TNs (tree or other types of graphs)~\cite{tree-TN,TNbook} and numerical compression of the Hamiltonian MPO by 
means of singular value decomposition~\cite{MPOcompression1,MPOcompression2}. 
These are already developed TN techniques and can be easily incorporated here, which is an additional advantage of \TNQF. 
 
{\it Comparison with lattice DMRG.}
Comparing \TNQF\ with lattice-based DMRG simulations in terms of accuracy of ground state simulations, the outcome varies significantly depending on the model representation (bosonic or fermionic) and parameter values (cf.~Ref.~\cite{Roy2021} for sG and Refs.~\cite{mS_phase_transition,mS_mps_Banuls_2013,mS_mps_Orus_2017} for mS). 
Moreover, given the possibility of controlled extrapolations to the limit $k_\text{max} \rightarrow \infty$, our method seems to be of similar accuracy as relativistic continuous MPS without a UV cutoff~\cite{Tilloy2021}. 
However, the major potential of \TNQF\ compared to lattice DMRG simulations lies in the simulation of out of equilibrium dynamics. 
While lattice simulations suffer from unbounded entanglement growth~\cite{buyens_MS_real-time_2017}, which limits the maximal accessible time, this limitation seems to be absent in our approach, at least for certain types of quenches (cf. Fig.~\ref{fig:mS_quench_dynamics}). 

{\it Comparison with conventional Hamiltonian
truncation methods.} 
The TN implementation of HT offers a new perspective on the problem of classical simulation of QFTs. 
The method dramatically expands the accessible size of the truncated Hilbert space far beyond the reach of exact diagonalisation, leading to a significant improvement in accuracy compared to standard HT and opening up the possibility to study the emergence of equilibrium in QFTs. 
At the conceptual level, the coalition of HT and TN methods can provide a quantum information theoretic view on the traditional RG approach to QFT. 
The decoupling of low- and high-wavelength degrees of freedom in non-critical QFT models can be given a precise meaning in terms of momentum-space entanglement. 

The representation of Hamiltonians in terms of TNs has the additional substantial advantage that the MPO formalism can be readily used to simulate QFTs at thermal equilibrium. 
In combination with the observation of equilibration in simulations, this would enable direct tests of ergodicity in QFTs~\cite{QFT_ergodicity_1, QFT_ergodicity_2} through the comparison of steady states expectation values with those at thermal equilibrium. 
Unsurprisingly, the computational cost of thermal state simulations is even higher than that of the pure states algorithms presented above, so we leave this question for future investigation.

{\it Perspectives for higher dimensional quantum field theories.} 
Even though we have focused here on two paradigmatic $(1+1)d$ models, the method can be extended to any other model for which HT is applicable, provided that the corresponding unperturbed model is originally defined in a tensor-product space. 
This class of models includes the $\phi^4$ model~\cite{Rychkov2014,Bajnok_16}, $(1+1)d$~QCD~\cite{HT_QCD}, and potentially also higher dimensional models~\cite{Hogervorst2014} using \emph{projected entangled pair states} (PEPS), a generalisation of MPS to higher dimensions~\cite{PEPS_2}. 
From a technical perspective, one of the main achievements in the development of the method is the exact construction of MPS representations of global projectors. 
This extends the application of TNs to problems defined in the solution space of a global constraint. 

{\it Summary and outlook.}
We have introduced a new Hamiltonian truncation tensor network method for continuous quantum field theories, which combines two standard methods from different worlds to capture quantum field theories classically both in and out of equilibrium. 
This method is a significant improvement over the state of the art for this purpose. 
We hope that the present work will contribute to rendering tensor methods more common tools for studying interacting quantum field theories and their dynamics.

\section*{Methods}
\label{sec:methods}

The \TNQF\ method presented in this work is based on classical simulation of quantum many-body systems using \emph{tensor networks} (TN). 
TNs offer efficient representations of the full many-body wave function based on a network of interconnected local tensors, whose contraction recovers the probability amplitudes. 
In one spatial dimension the most prominent family of TNs is known as the \emph{matrix product state} (MPS), upon which the \TNQF\ method is built. 
This enables the immediate application of established and standard numerical TN methods, such as the \emph{density matrix renormalization group} (DMRG)~\cite{White1992} for the simulation of ground and low-energy excited states or the \emph{time dependent variational principle} (TDVP)~\cite{Haegeman2011} for the simulation of dynamics, with little modifications. 
Before we present the \TNQF\ method in detail, we introduce the basic ideas of HT. 

\subsection*{Hamiltonian truncation}

HT is based on the following idea: 
A general interacting QFT model can be considered as a perturbation of a suitably chosen exactly solvable, e.g., free, model. 
Using the latter to construct the Hilbert space basis and truncating it based on some criterion, e.g., a maximum energy cutoff, one can obtain matrix approximations of the target model's Hamiltonian by projecting the perturbation operator onto the truncated finite subspace. 
Specifically, if $H$ is the target model's Hamiltonian we split it as a sum of the free part $H_0$ and the interaction $V$ with a coefficient or `coupling constant' $\lambda$, 
\begin{align}
    H = H_0 + \lambda V.
    \label{HT}
\end{align}
Let us consider a truncation of the Hilbert space spanned by all eigenstates of $H_0$ that have eigenvalue $E\leq E_\text{cut}$ for some cutoff value $E_\text{cut}$. 
We construct a sequence of matrix approximations of $H$ in the truncated subspaces that correspond to increasing values of the truncation cutoff $E_\text{cut}$ and compute the corresponding spectra. 
If the effect of the perturbation weakens as the cutoff increases and it does so sufficiently fast, a low energy part of the sequence of spectra will converge in the sense that differences between successive spectra decrease with the cutoff. 
An approximation of the target model's spectrum can be obtained by extrapolation to infinite cutoff. 

Even though this approach sounds simple, its validity and efficiency relies on the choice of splitting in Eq.~\eqref{HT}. 
Splitting the Hamiltonian of a given QFT into the obvious free and interaction part does not always satisfy the condition of weakening interaction at increasing energy cutoffs. 
In this case the resulting spectra keep drifting as the cutoff increases. 
Moreover, there are frequently more than one equivalent forms of a QFT model related to each other by an exact nonlinear field transformation (duality) and the corresponding free and interaction part are completely different pieces of the Hamiltonian. 
Convergence may be satisfied in one splitting but not the other. 
Even when convergence is satisfied, it may be convenient to use a linear field transformation to recast the free Hamiltonian into an equivalent form that satisfies the same criterion but results in faster convergence. 

These issues can be partially addressed by means of \emph{renormalisation group} (RG) theory. 
RG theory shows that in the \emph{high energy limit} (UV limit) a general QFT model is described by an RG fixed point corresponding to a \emph{conformal field theory} (CFT). 
CFTs are models that are invariant under conformal transformations, a property that makes them exactly solvable. 
Perturbing a CFT by some operator results in a new model that may or may not have the same UV fixed point. 
This depends on the RG `scaling dimension' $\Delta$ of the perturbing operator in the given CFT. 
If $\Delta<1$ the perturbing operator is `relevant', if $\Delta>1$ it is `irrelevant', and if $\Delta=1$ it is `marginal'. 
Relevant perturbations modify the low energy spectrum of the original CFT model but not its high energy spectrum, therefore the resulting model has the same UV fixed point. 
In conclusion, a general QFT model can be constructed as a perturbation of the CFT that corresponds to its UV fixed point by an RG relevant operator. 
Based on the above, such operators would satisfy the condition of weakening of the perturbation at increasing cutoff. 
This guarantees convergence of HT for all RG relevant perturbations of CFT models~\cite{YUROV1990}. 
In fact, the smaller the scaling dimension the faster the convergence of the spectra. 

For the sG model RG theory predicts that the cosine interaction is an operator of scaling dimension $\Delta=\beta^2/(8\pi)$ therefore it is a relevant perturbation of the free massless boson for any $\beta<\sqrt{8\pi}$~\cite{sG_RG}. 
For the mS model in its original form the convergence criterion is not satisfied by the electromagnetic interaction between the fermionic fields. 
However, it is satisfied when we express the model in its bosonised form. 
The fermion interaction mediated by the electromagnetic field is mapped to the boson mass term of Eq.~\eqref{H_mS}, whereas the fermion mass term is mapped to the bosonic cosine interaction. 
In the UV limit the free massive boson reduces to the same CFT, therefore the cosine interaction is relevant with scaling dimension $\Delta=1/2$. 

Therefore, numerical methods for the study of QFT models can be based on truncation of the Hilbert space following an RG-prescribed splitting of the Hamiltonian into a free or, more generally, exactly solvable part, and the complementary part. 
Nevertheless, convergence of the numerically computed spectra is generally very slow (power-law) in most interesting cases, which combined with the exponential growth of the truncated Hilbert space severely limits the accuracy and range of validity of HT. 
Hence a more efficient TN-based implementation would be beneficial. 

\subsection*{Methods of \TNQF}

Following the ideas of HT, we use the free part of the target model's Hamiltonian to construct the Hilbert space basis and expand the interaction part in it. 
The free Hamiltonian of any QFT that is invariant under space translations can be diagonalised in Fourier space, where it decomposes into a system of independent harmonic oscillator modes (with the possible exception of the zero momentum mode). 
For a QFT defined in finite space with periodic boundary conditions, the corresponding Fourier decomposition is infinite but discrete, labelled by the mode numbers $k$ that take all integer values. 
The Hilbert space is therefore the tensor product of those of the momentum modes, each of which is spanned by states with any possible occupation number. 

Unlike conventional TNs that resemble the underlying lattice structure of the system in real space, our approach works entirely in Fourier space. 
In this respect, each individual MPS tensor represents one momentum mode $k$. 
A general state vector can be defined through its expansion in basis vectors $\ket{\{n_{k}\}}$ characterised by the mode occupation numbers, i.e., 
\begin{align}
    \ket{\Psi} = \sum_{\lbrace n_k \rbrace}
    \mathcal{S}_{\{n_{k}\}}
    \ket{\{n_{k}\}}, 
    \label{Psi}
\end{align}
where the sums run over all possible occupation numbers $n_k\in \mathbb N$ of all modes $k \in \mathbb Z$. 
In MPS theory, such a state vector can be approximated by a state of the form
\begin{align}
    \ket{\Psi_\mathrm{MPS}} = \sum_{\{n_k\}} \left( \sum_{\{\alpha_k\}}  \prod_{k} \, \lbrack \mathcal{M}^{(k)} \rbrack^{\alpha_{k+1}}_{\alpha_{k}, n_k} \right) \ket{\{n_k\}}
    \label{Psi_MPS2},
\end{align}
where $\mathcal{M}^{(k)}$ are site-dependent three-index tensors with $n_k$ being the `physical' indices corresponding to the mode occupation numbers, and $\alpha_k$ the `virtual' MPS indices. 
To obtain an approximation of the probability amplitudes $\mathcal{S}$ we contract the $\mathcal{M}$ tensors successively over all virtual indices. 
The MPS decomposition reduces the exponential scaling of the number of coefficients for $\mathcal S$ to a polynomial scaling for the product of $\mathcal{M}^{(k)}$, as long as the individual virtual indices $\alpha_k$ do not scale exponentially themselves. 
This is a typical property of low-energy states satisfying the area-law of entanglement entropy~\cite{Area_laws,CiracApproximability2008}, in which case MPS provide an efficient and accurate ansatz for the simulation of those quantum many-body system. 
The MPS representation in Eq.~\eqref{Psi_MPS2} is visualised in Fig.~\ref{fig:TN}. 

In order to perform a numerical computation we need to truncate the a priori infinite Hilbert space to a truncated space $\mathcal{H}_T$. 
This is achieved in two steps. 
First, we restrict the maximum absolute momentum which corresponds to a sharp UV cutoff and keep only the modes with $|k|\leq k_\mathrm{max}$. 
Second, we restrict the maximum occupation number of each mode keeping only the states with occupancy \mbox{$n_{k}\leq n(k)$}. 
Hence, the momentum modes in Eqs.~\eqref{Psi} and \eqref{Psi_MPS2} are restricted to $k \in \{-k_\mathrm{max}, \dots, +k_\mathrm{max}\}$ instead of $\mathbb{Z}$, and similarly $n_k \in \{0, \dots, n(k)\}$ instead of $\mathbb{N}$. 
The virtual indices $\alpha_k$ take values in auxiliary spaces whose dimensions $\chi_k$, the `bond dimensions', are automatically adjusted in the course of the computation according to a chosen error tolerance to capture the most relevant quantum correlations in the system. 
Hence, in the numerical implementation, the length of the MPS is $2k_\mathrm{max}+1$ and the variational parameters are contained in the individual tensors $\mathcal M^{(k)}$. 
The efficient MPS representation enables us to reach (truncated) Hilbert spaces of dimension of the order of $\operatorname{dim}(\mathcal H_\text{T}) \sim 10^{10}$, 
that are inaccessible with standard HT methods based on exact diagonalization. 

The occupation number cutoffs $n(k)$ fully determine the truncation applied and can be adjusted and optimised dynamically in the course of the computation according to a global error tolerance condition. 
However, based on extensive efficiency and accuracy tests, we find that the following simple choice is suitable for most purposes. 
Given that the modes above the UV cutoff are omitted on the basis of the weakening of the interaction for increasing momentum, it is reasonable to choose $n(k)$ as a decreasing function of $|k|$, specifically the integer part of $n_\mathrm{max}/|k|$ where $n_\mathrm{max}$ is the maximum occupation of the first mode (and maximum occupation in all modes) and it is an additional truncation parameter that should be $n_\mathrm{max}\geq k_\mathrm{max}$. 
A common choice is setting $n_\mathrm{max}=k_\mathrm{max}$, so that $n(k)$ decreases to $n(\pm k_\mathrm{max}) = 1$ at the edges of the MPS. 
However, we may choose $n(k)$ in different ways (e.g., for the purposes of UV cutoff extrapolation, as explained in the Supplementary Material~\cite{SM}). 
The zero mode cutoff is chosen independently according to an error tolerance test. 

Having defined the truncated subspace and its MPS variational representation, we should now construct the sG and mS Hamiltonians as MPOs acting on this subspace. 
For both models the free part Hamiltonian is a simple sum of single mode terms which can be easily represented in MPO form. 
The cosine interaction potential is the integral over all space of a sum of two imaginary exponentials of the field $\Phi$. 
Constructing such exponential operators in TN form is straightforward, too. 
Using the expansion of $\Phi$ in momentum modes, they are expressed as products of single mode operators $\mathcal V^{(k)}$ acting separately on each of them (see Supplementary Material for more details). 
However, performing the spatial integration of the interaction potential is more challenging. 
As explained in the main text, this last step of the MPO construction is equivalent to imposing momentum conservation as a global constraint. 
A single exponential operator is then constructed according to
\begin{align}
    \hat O &= 
    \sum_{\lbrace n_k \rbrace} \sum_{\lbrace n_k^\prime \rbrace} \left\lbrack \sum_{\lbrace \delta_k \rbrace}  \left( \prod_{k} \, \lbrack \mathcal V^{(k)}\rbrack_{n_k}^{n_k^\prime,\delta_k} \right)  \mathsf{D}(\lbrace \delta_k \rbrace) \right\rbrack \nonumber \\
    & \qquad \times \ket{\{n_{k}\}} \bra{\{n_{k}^\prime\}} .
\end{align}
Here, $n_k$ ($n_k^\prime$) are the individual occupation numbers of the ket (bra) indices of the single mode operators $\mathcal V^{(k)}$, and \mbox{$\delta_k = k(n_k - n_k^\prime)$} denotes the momentum transfer on each site. 
The global constraint $\mathsf{D}(\lbrace \delta_k \rbrace)$ is a central object in the final MPO construction, that can be \emph{efficiently and exactly} implemented using \emph{symmetric tensor networks} (STN)~\cite{STN_0, STN}. 

It should be noted that there are some important differences between the present problem and typical applications of STN. 
First, in local lattice models that satisfy a global symmetry (e.g., total particle number conservation), the MPO representation of the Hamiltonian constructed by adding local terms is already symmetric and using STN is merely an option for efficiency improvement. 
In contrast, here imposing the symmetry constraint is a necessary separate step in the construction of the Hamiltonian MPO. 
Second, given that the interaction is non-local in momentum space but couples all momentum modes to all others, symmetry sector projectors represented as symmetric MPS are not expected to have a low bond dimension (we will comment on this below). 

In the present case the construction of a projector on a sector with total momentum $P$ corresponds to finding the space of solutions of the equation $ P = (2\pi/L) \sum_{k} k n_k $. 
This belongs to the class of `counting problems', a special class of \emph{constraint satisfaction problems} that are generally much harder than `decision problems'. 
The global constraint $\mathsf{D}(\lbrace \delta_k \rbrace)$ is precisely a projector onto a total momentum $P$ sector, that falls into that class. 
TNs can provide efficient construction methods for this type of problems~\cite{Liu2023}. 
In our method an exact TN representation of arbitrary projectors to $P$-sectors is constructed employing STN. 
This is achieved by constructing an MPS for an equal-weight superposition of all possible combinations of mode occupations that satisfy the global constraint and contracting it with the MPO that should be projected. 
The final exponential MPO used to generate the cosine interaction is, therefore, given by
\begin{align}
    & \hat O_\mathrm{MPO} = \sum_{\lbrace n_k \rbrace} \sum_{\lbrace n_k^\prime \rbrace} 
    \left( \sum_{\{\alpha_k\}} \prod_{k} \, \lbrack \mathcal T^{(k)} \rbrack_{\alpha_k,n_k}^{n_k^\prime,\alpha_{k+1}} \right) \ket{\{n_{k}\}} \bra{\{n_{k}^\prime\}} \nonumber \\ 
    & \text{ with } \quad \lbrack \mathcal T^{(k)} \rbrack_{\alpha_k,n_k}^{n_k^\prime,\alpha_{k+1}} := \sum_{\{\delta_k\}} \lbrack \mathcal V^{(k)}\rbrack_{n_k}^{n_k^\prime,\delta_k} \lbrack \mathcal D^{(k)} \rbrack_{\alpha_k, \delta_k}^{\alpha_{k+1}}.
\end{align}
Here, $\mathcal D^{(k)}$ denotes the three-index MPS tensors of the projector, and contracting over each of the momentum transfer indices $\delta_k$ yields a regular four-index MPO tensor $\mathcal T^{(k)}$. 
The dimension of the virtual indices $\alpha_k$ of the MPS representation of $\mathsf{D}(\lbrace \delta_k \rbrace)$ depend on $k_\text{max}$ and $n(k)$ of the truncated Hilbert space. 
The general construction of the interacting MPO is visualized in Fig.~\ref{fig:TN} and explained in detail in the Supplementary Material. 

This step plays a crucial role in our technique and is the one that mainly influences its efficiency, as the bond dimension of the projection MPS is much larger than those of the other constituents of the resulting MPO representing the Hamiltonian. 
Specifically, we find that its bond dimension increases polynomially with the total mode number. 
A correspondingly large MPS bond dimension $\chi$ is required with increasing MPO bond dimension, such that the number of modes we can include in the simulations is typically limited by run-time and memory capacity of a typical HPC cluster. 

Once the MPO representation of the Hamiltonian is obtained, the simulation of the model is done using standard TN techniques. 
We use two-site DMRG for the computation of the ground state~\cite{Schollwoeck2011} and the first excited state~\cite{Stoudenmire2012}, as well as two-site TDVP~\cite{Haegeman2011} for the simulation of quench dynamics. 
We found that a global subspace expansion~\cite{Yang2020} is necessary to obtain correct results for quench dynamics due to the non-local (in momentum space) character of the interactions and the non-uniform truncated Hilbert spaces of the momentum modes. 
Generally, the dominant truncation parameters affecting the MPS bond dimension are set to a maximal error of $\epsilon_D = 10^{-6}$ and $\epsilon_T = 10^{-4}$ for the two-site DMRG and TDVP algorithms, respectively, or a fixed limit of $\chi = 2500$. 
Details are presented in the Supplementary Material.

\begin{acknowledgments}

We want to thank David Horváth and Gábor Takács, as well as Mari Carmen Bañuls, Andreas Haller, Shozab Qasim, Matteo Rizzi and Frank Verstraete for inspiring discussions. This work has been funded by the Deutsche Forschungsgemeinschaft (DFG, German Research Foundation) under the project number 277101999 -- CRC 183 (project B01), and the BMBF (MUNIQC-Atoms, FermiQP). J.~E.~acknowledges funding of the ERC (DebuQC). S.~S. acknowledges support by the European Union’s Horizon 2020 research and innovation programme under the Marie Sk\l{}odowska-Curie grant agreement No.~101030988. We also acknowledge the use of \mbox{TensorKit}~\cite{TensorKitGitHub} for the numerical simulations.

\end{acknowledgments}

\bibliography{HTTN_references}

\newpage

\appendix

\section*{SUPPLEMENTARY INFORMATION}

\section{Model definition}
\label{app:modelDefinition}

\subsection{Introduction}

In this Supplementary Information, we explain in detail the construction of the \emph{sine-Gordon} (sG) and \emph{massive Schwinger} (mS) Hamiltonians in the Fock bases of the free massless and massive boson Hamiltonians, respectively. 
We first focus on the sG model and treat the mS model afterwards. 

\subsection{Sine-Gordon model}
\label{app:SG}

As discussed in the main text, the sG Hamiltonian can be written in the form 
\begin{equation}
    H_{\mathrm{sG}} = H_{0} + \lambda V
    \label{H_SG},
\end{equation}
where $H_{0}$ is the free massless boson Hamiltonian and $V$ the cosine interaction
\begin{align}
    H_{0} & =\frac{1}{2}\int_{0}^{L}\mathrm{d}x\,\left(\Pi^{2}+(\partial_{x}\Phi)^{2}\right),\\
    V & = - \int_{0}^{L}\mathrm{d}x \, \twoOrd{\cos\beta\Phi}. 
\end{align}
Here, $\beta$ is the frequency of the cosine interaction, $L$ is the system size, and we use units $\hbar=1, c=1$. 
The normal ordering of the interaction, denoted by $\twoOrd{\cdot}$, is performed in reference to the mode expansion of the free Hamiltonian and will be precisely defined in the following.

The quantum fields $\Phi$ and $\Pi$ satisfy the canonical commutation relations 
\begin{equation}
    [\Phi(x),\Pi(x')]=\mathrm{i}\delta(x-x').
\end{equation}
The field $\Phi$ is `compactified', i.e., it is an angular field taking values on a circle of radius
\begin{equation}
    R = 1/\beta.
\end{equation}
We assume periodic boundary conditions, which taking into account the field compactification, read
\begin{equation}
    \Phi(L)=\Phi(0)+2\pi Rm, \text{ for any } m \in \mathbb{Z}.
\end{equation}
We will use the Fock basis of $H_{0}$ as the basis on which we will perform the truncation and construct the full Hamiltonian and observables. 
To this end, we need to expand the two conjugate fields in the diagonalisation modes of $H_{0}$. 

\subsubsection{Mode expansion -- Fock basis}
\label{subsub:modeExpansionFockBasis}

The expansion of the field $\Phi(x)$ in the modes that diagonalise $H_{0}$ for the given periodic boundary conditions is 
\begin{equation}
    \Phi(x)=\tilde{\Phi}_{0}+2\pi RM\frac{x}{L}+\sum_{k \in \mathbb{Z}^*} f_{k}(x) \tilde{\Phi}_{k},
    \label{eq:mode_expansion_phi}
\end{equation}
where the eigenfunctions $f_{k}(x)$ are plane waves
\begin{equation}
    f_{k}(x)=\frac{1}{\sqrt{L}}\mathrm{e}^{\mathrm{i}p_{k}x}
\end{equation}
with momenta $p_{k}$ quantised according to
\begin{equation}
    p_{k}=2\pi k/L,\;k\in\mathbb{Z}^{*}.
\end{equation}
Here, we use $\mathbb Z^* = \mathbb Z \setminus \lbrace 0\rbrace$ to denote integer numbers excluding zero. 
The operator $M$ is the winding number operator, which has eigenvalues $m\in\mathbb{Z}$ counting the number of times the continuous field $\Phi$ winds around the circle when moving from the left to the right edge, $M=\left(\Phi(L)-\Phi(0)\right)/(2\pi R)$. 
The sum of all other terms in Eq.~\eqref{eq:mode_expansion_phi} is the usual Fourier series of periodic functions. 

The expansion of the canonical momentum field $\Pi$ is given by
\begin{equation}
    \Pi(x)=\tilde{\Pi}_{0}+\sum_{k \in \mathbb Z^*}f_{k}(x)\tilde{\Pi}_{k}\label{eq:mode_expansion_pi},
\end{equation}
with canonical commutation relations 
\begin{align}
    \left\lbrack \tilde{\Phi}_{0},\tilde{\Pi}_{0} \right\rbrack & = \mathrm{i},\label{eq:CCR0}\\{}
    \left\lbrack \tilde{\Phi}_{k},\tilde{\Pi}_{k'}\right\rbrack & = \mathrm{i}\delta_{k,-k'}.\label{eq:CCR}
\end{align}
In terms of these modes, the free Hamiltonian is transformed into
\begin{equation}
    \begin{split}
    H_{0} &= \frac{1}{2L} \tilde{\Pi}_{0}^{2}+\frac{(2\pi R)^{2}}{2L}M^{2} \\
    & +\frac{1}{2}\sum_{k \in \mathbb Z^*}\left(\tilde{\Pi}_{k}\tilde{\Pi}_{-k}+p_{k}^{2}\tilde{\Phi}_{k}\tilde{\Phi}_{-k}\right),
    \label{eq:H0_mode_exp}
    \end{split}
\end{equation}
as a simple calculation shows.

\subsubsection*{Non-zero modes}

All modes with $k \neq 0$, which correspond to non-zero momenta, appear in the Hamiltonian as decoupled harmonic oscillators with frequencies
\begin{equation}
    \omega_{k}=|p_{k}|,\,k\in\mathbb{Z}^{*}.
\end{equation}
Therefore, in order to diagonalise the Hamiltonian $H_{0}$ we define ladder operators for each of these modes through the relations
\begin{align}
    \tilde{\Phi}_{k} & =\frac{1}{\sqrt{2\omega_{k}}}\left(A_{k}+A_{-k}^{\dagger}\right) , \\
    \tilde{\Pi}_{k} & =(-{\rm i})\sqrt{\frac{\omega_{k}}{2}}\left(A_{k}-A_{-k}^{\dagger}\right),
    \label{eq:ladder_ops}
\end{align}
with the standard canonical commutation relations
\begin{align}
    \left\lbrack A_{k},A_{k'}^{\dagger} \right\rbrack & =\delta_{k,k'}.
\end{align}
Overall, the mode expansion of $\Phi
$ in terms of ladder operators can be expressed as 
\begin{equation}
    \begin{split}
        \Phi(x) & = \tilde{\Phi}_{0}+2\pi RM\frac{x}{L} \\
        & +\frac{1}{\sqrt{L}}\sum_{k \in \mathbb{Z}^*}\frac{1}{\sqrt{2\omega_{k}}}\left(A_{k}\mathrm{e}^{+\mathrm{i}p_{k}x}+A_{k}^{\dagger}\mathrm{e}^{-\mathrm{i}p_{k}x}\right).
    \label{eq:mode_expansion_phi_2}
    \end{split}
\end{equation}

\subsubsection*{Zero modes}

Unlike for the non-zero modes, the operators $\tilde{\Pi}_{0}$ and $M$ appear in the free Hamiltonian unaccompanied by their respective canonical conjugates, so that they are non-harmonic modes. 
The free Hamiltonian is already diagonal in their eigenspaces. 
Given that the boson field $\Phi$ takes values on a circle of radius $R$, i.e., in field theory terminology, it is `compactified' with `compactification radius' $R$, field values that differ by integer multiples of $2\pi R$ are considered identical. 
As we have already seen, this means that, imposing periodic boundary conditions, winding field configurations are allowed, i.e., after unwrapping a continuous field configuration $\Phi$ the difference of its values at the edges $x=0$ and $x=L$ can be any integer multiple of $2\pi R$ 
\begin{equation}
    \Phi(L) = \Phi(0)+2\pi Rm,\quad m\in\mathbb{Z}.
\end{equation}
The field $\Phi$ is periodically continuous at the boundaries $x=0$ and $x=L$ but modulo $2\pi R$. 
The integer $m$ is called the `winding number' and it expresses how many times the field winds around the circle in the positive direction when we move from one edge to the other. 

However, the compactified nature of the quantum field has an additional important consequence. 
Conceptually, only observables constructed out of space or time derivatives of $\Phi(x)$ and imaginary exponentials of the form $\mathrm{e}^{\mathrm{i}n\Phi(x)/R}$ for integer $n$ are acceptable, and the Hamiltonian of a physical model can only involve such operators. 
Note, in particular, that the operator $\tilde{\Phi}_{0}$ is not well-defined as it is not single-valued. 

Next, note that from the zero mode canonical commutation relation in Eq.~\eqref{eq:CCR0} we have
\begin{equation}
    \mathrm{e}^{-\mathrm{i}n\tilde{\Phi}_{0}/R}\tilde{\Pi}_{0}\mathrm{e}^{+\mathrm{i}n\tilde{\Phi}_{0}/R}=\tilde{\Pi}_{0}+\frac{n}{R},
    \label{eq:zero-mode_exp_op}
\end{equation}
which means that the exponential operators $\mathrm{e}^{\mathrm{i}n\tilde{\Phi}_{0}/R}$ shift the eigenvalues of $\tilde{\Pi}_{0}$. 
Indeed, if $|\psi\rangle$ is an eigenstate of $\tilde{\Pi}_{0}$ with eigenvalue $\pi_{0}$, then the state vector $\mathrm{e}^{\mathrm{i}n\tilde{\Phi}_{0}/R}|\psi\rangle$ is another eigenstate of $\tilde{\Pi}_{0}$ with eigenvalue $\pi_{0}+n/R$. 
Since these exponential operators are well-defined only for integer $n$, the spectrum of $\tilde{\Pi}_{0}$ is restricted to $\mathbb{Z}/R$, and we can label its eigenstates with an integer $\ell$ so that 
\begin{equation}
    \tilde{\Pi}_{0}|\ell\rangle = \frac{\ell}{R} |\ell\rangle, \quad \ell\in\mathbb{Z}. 
    \label{eq:zero-mode_eigstates_1}
\end{equation}
From Eq.~\eqref{eq:zero-mode_exp_op} we then find
\begin{equation}
    \mathrm{e}^{\mathrm{i}n\tilde{\Phi}_{0}/R}|\ell\rangle=|\ell+n\rangle . 
    \label{eq:zero-mode_eigstates_2}
\end{equation}
An important consequence of this is that the spectrum of the first Hamiltonian term in Eq.~\eqref{eq:H0_mode_exp}, which is proportional to $\tilde{\Pi}_{0}^{2}$, is also discrete like those of the second term, which is proportional to $M^{2}$, and of the non-zero modes. 

\subsubsection{Hilbert space}

The Hilbert space of the model is the tensor product of the eigenspaces of $\tilde{\Pi}_{0}$, of $M$ and of the Fock spaces of the infinite set of harmonic non-zero momentum modes. 
The basis state vectors are characterised by the eigenvalues of $\tilde{\Pi}_{0}$, $M$ and the sequence $\{n_{k}\}$ of occupation numbers of the non-zero modes. 
Concretely, they are given by
\begin{equation}
    |\Psi\rangle=|\ell,m;\{n_{k}\}\rangle=\prod_{k \in \mathbb Z^*}\frac{A_{k}^{\dagger n_{k}}}{\sqrt{n_{k}!}}|\ell,m;0\rangle
    \label{eq:basis_states}
\end{equation}
for $\ell\in\mathbb{Z}$, $m\in\mathbb{Z}$ and $n_{k}\in\mathbb{N}$.

\subsubsection{Free part of the Hamiltonian}
\label{sec:SG_H0}

In terms of the mode operators defined above, the free Hamiltonian is
\begin{align*}
    H_{0} & = \frac{1}{2}L\tilde{\Pi}_{0}^{2}+\frac{(2\pi R)^{2}}{2L}M^{2}+\frac{1}{2}\sum_{k \in \mathbb Z^*}\omega_{k}\left(A_{k}A_{k}^{\dagger}+A_{k}^{\dagger}A_{k}\right) \\
    & = \frac{1}{2}L\tilde{\Pi}_{0}^{2}+\frac{(2\pi R)^{2}}{2L}M^{2}+\sum_{k \in \mathbb Z^*}\omega_{k}A_{k}^{\dagger}A_{k}+\frac{1}{2}\sum_{k \in \mathbb Z^*}\omega_{k}.
\end{align*}
The last term is an infinite constant corresponding to the vacuum state energy of the infinite set of harmonic modes. 
Normal ordering removes this infinite additive constant and we obtain
\begin{align}
    \twoOrd{H_{0}} = \frac{1}{2}L\tilde{\Pi}_{0}^{2}+\frac{(2\pi R)^{2}}{2L}M^{2}+\sum_{k \in \mathbb Z^*}\omega_{k} A_{k}^{\dagger}A_{k}.
    \label{eq:normalOrderedFreeHamiltonianSG}
\end{align}
As already noted, the zero mode terms (first two in Eq.~\eqref{eq:normalOrderedFreeHamiltonianSG}) correspond to projections to the eigenspaces of the operators $\tilde{\Pi}_{0}$ and $M$. 
The eigenvalues of $\twoOrd{H_{0}}$ in the basis state vectors $|\Psi\rangle$ are
\begin{equation}
    \begin{split}
        E[\Psi] &= \langle\Psi| \twoOrd{H_{0}} |\Psi\rangle \\
        & = \frac{1}{2}L\frac{\ell^{2}}{R^{2}}+\frac{(2\pi R)^{2}}{2L}m^{2}+\sum_{k  \in \mathbb Z^*} n_{k} \omega_{k}.
    \label{H_0_me}
    \end{split}
\end{equation}

\subsubsection{Interacting part of the Hamiltonian}

\subsubsection*{Vertex operator}

Having defined the Hilbert space basis and constructed $H_{0}$, our next step is to calculate the matrix elements of $V$ in this basis. 
We first define exponential operators of the form 
\begin{align}
    V_{\alpha}(x) = \mathrm{e}^{\mathrm{i}\alpha\Phi(x)}.
    \label{eq:vertexOperatorDefinition}
\end{align}
Based on the earlier discussion on consequences of compactification, exponential operators of the above form are well-defined (single-valued) only if $\alpha$ is an integer multiple of $1/R$, i.e. 
\begin{equation}
    \alpha = \frac{n}{R}, \quad n \in \mathbb{Z}.
    \label{eq:alpha_condition}
\end{equation}
Using the mode expansion in Eq.~\eqref{eq:mode_expansion_phi_2} and given
that different modes commute, we can write $V_{\alpha}(x)$ as a product of single mode exponentials. 
Moreover, from Eq.~\eqref{eq:ladder_ops} and using the Baker–Campbell–Hausdorff formula 
\begin{equation}
    \mathrm{e}^{A+B}=\mathrm{e}^{A}\mathrm{e}^{B}\mathrm{e}^{-\tfrac{1}{2}[A,B]},
\end{equation}
which is valid when $[A,B]$ commutes with $A$ and $B$, we can rewrite $V_{\alpha}(x)$ as a normal-ordered exponential, where each non-zero mode factor is cast in the form 
\begin{align}
    & \mathrm{e}^{\frac{\mathrm{i}\alpha}{\sqrt{2\omega_{k}}}\left(f_{k}(x)A_{k}+f_{k}^{*}(x)A_{k}^{\dagger}\right)} \\
    \nonumber
    & \qquad = \mathrm{e}^{-\tfrac{1}{2}\alpha^{2}\frac{|f_{k}(x)|^{2}}{2\omega_{k}}}\mathrm{e}^{\mathrm{i}\alpha\frac{f_{k}^{*}(x)}{\sqrt{2\omega_{k}}}A_{k}^{\dagger}}\mathrm{e}^{\mathrm{i}\alpha\frac{f_{k}(x)}{\sqrt{2\omega_{k}}}A_{k}} 
    \nonumber\\
    & \qquad = \mathrm{e}^{-\tfrac{1}{2}\alpha^{2}\frac{|f_{k}(x)|^{2}}{2\omega_{k}}} \twoOrd{\mathrm{e}^{\frac{\mathrm{i}\alpha}{\sqrt{2\omega_{k}}}\left(f_{k}(x)A_{k}+f_{k}^{*}(x)A_{k}^{\dagger}\right)}}.
    \nonumber
\end{align}
Therefore, the relation between $V_{\alpha}(x)$ and its normal-ordered form $\twoOrd{V_{\alpha}(x)}$ is 
\begin{equation}
    V_{\alpha}(x) = \mathrm{e}^{-\tfrac{1}{2}\alpha^{2} \sum_{k \in \mathbb{Z}^*} \frac{|f_{k}(x)|^{2}}{2\omega_{k}}} \twoOrd{V_{\alpha}(x)}.
\end{equation}
For periodic boundary conditions, the series in the exponent of the proportionality factor is given by
\begin{align}
    \sum_{k \in \mathbb Z^*}\frac{|f_{k}(x)|^{2}}{2\omega_{k}} & =\frac{1}{4\pi}\sum_{k \in \mathbb Z^*}\frac{1}{|k|}=\frac{1}{2\pi}\sum_{k=1}^{\infty}\frac{1}{k},
\end{align}
which diverges. 
This is once again due to the infinite number of harmonic modes. 
It is, therefore, more meaningful to use the normal-ordered exponential operators $\twoOrd{V_{\alpha}(x)}$, and from now on we focus on these. 

Their matrix elements between arbitrary pairs of basis states of Eq.~\eqref{eq:basis_states} are 
\begin{align*}
    \langle\Psi'| \twoOrd{V_{\alpha}(x)} |\Psi\rangle & = \langle \ell', m'; \{n'_{k}\}| \twoOrd{\mathrm{e}^{\mathrm{i}\alpha\Phi(x)}} |\ell, m; \{n_{k}\} \rangle\\
    & = \langle\ell'| \mathrm{e}^{\mathrm{i}\alpha\tilde{\Phi}_{0}} |\ell\rangle\,\langle m'|\mathrm{e}^{\mathrm{i}2\pi\alpha RM\frac{x}{L}}|m\rangle \\
    & \times \prod_{k \in \mathbb Z^*} \langle n'_{k}|\mathrm{e}^{\mathrm{i}\alpha f_{k}^{*}(x)\frac{A_{k}^{\dagger}}{\sqrt{2\omega_{k}}}}\,\mathrm{e}^{\mathrm{i}\alpha f_{k}(x)\frac{A_{k}}{\sqrt{2\omega_{k}}}}|n_{k}\rangle.
\end{align*}
The first factor can be calculated using Eq.~\eqref{eq:zero-mode_eigstates_2}. 
Together with Eq.~\eqref{eq:alpha_condition} we then find 
\begin{align}
    \langle\ell'| \mathrm{e}^{\mathrm{i}\alpha\tilde{\Phi}_{0}} |\ell\rangle = \delta_{\ell',\ell+\alpha R}, \quad (\text{for }\alpha\in\mathbb{Z}/R).
    \label{eq:vertexOperatorZeroMode_sG}
\end{align}
The second factor is trivially given by
\begin{equation}
    \langle m'| \mathrm{e}^{\mathrm{i}2\pi\alpha RM\frac{x}{L}} |m\rangle = \mathrm{e}^{\mathrm{i}2\pi\alpha Rm\frac{x}{L}}\delta_{m',m}.
\end{equation}
To calculate the non-zero mode factors, we first define the function
\begin{equation}
    F(n',n;z',z)=\langle n'|\mathrm{e}^{z'A^{\dagger}}\,\mathrm{e}^{zA}|n\rangle
\end{equation}
for a single harmonic oscillator. 
Its matrix elements can be found analytically as 
\begin{align}
    & F(n',n;z',z) \\
    & =\langle0|\frac{A^{n'}}{\sqrt{n'!}}\mathrm{e}^{z'A^{\dagger}}\,\mathrm{e}^{zA}\frac{A^{\dagger n}}{\sqrt{n!}}|0\rangle\nonumber \\
    & =\sum_{m,m'=0}^{\infty}\frac{z'^{m'}z^{m}}{m'!m!}\frac{1}{\sqrt{n'!n!}}\langle0|A^{n'}A^{\dagger m'}A^{m}A^{\dagger n}|0\rangle\nonumber \\
    & =\sqrt{n'!n!}\sum_{m,m'=0}^{\infty}\frac{z'^{m'}z^{m}}{m'!m!\left(n-m\right)!}\delta_{n'-m',n-m}\Theta(n-m) \nonumber \\
    & =\sqrt{n'!n!}\sum_{m=\max(0,n-n')}^{n}\frac{z'^{m+n'-n}z^{m}}{m!(n'-n+m)!(n-m)!},\nonumber 
\end{align}
where we have used that 
\begin{align}
    \langle0|A^{n'}A^{\dagger m'}A^{m}A^{\dagger n}|0\rangle & = \begin{cases}
        \frac{n'!n!}{\left(n-m\right)!} & \parbox{2.8cm}{if $n'-m'=n-m$ and $n\geq m$,}\\
        0 & \text{ otherwise.}
    \end{cases}
\end{align}
In our special case, the argument is \mbox{$z = \mathrm{i} \alpha f_{k}(x) / \sqrt{2\omega_{k}}$} and we have \mbox{$z'=-z^{*}$}, hence
\begin{align}
    \begin{split}
        & F(n',n;z,-z^{*}) \\
        & =\begin{cases}
            \sqrt{\frac{n!}{n'!}}z^{n'-n}L_{n}^{(n'-n)}\left(|z|^{2}\right) & \text{for }n'\geq n,\\
            \sqrt{\frac{n'!}{n!}}\left(-z^{*}\right)^{n-n'}L_{n'}^{(n-n')}\left(|z|^{2}\right) & \text{ for }n'<n.
        \end{cases}
    \end{split}
    \label{eq:F}
\end{align}
Here $x\mapsto L_{n}^{(\ell)}\left(x\right)$ are the associated Laguerre polynomials. 
This formula is also found in the derivation of the Fock-basis representation of the displacement operator in the context of coherent states of the quantum harmonic oscillator. 
Replacing \mbox{$z = \mathrm{i}\alpha f_{k}(x)/\sqrt{2\omega_{k}}$} we can express Eq.~\eqref{eq:F} as
\begin{align}
    F(n',n;z,-z^{*}) &= \mathrm{e}^{\mathrm{i}(n'-n)p_{k}x} G_{n',n} \left( \frac{\alpha^{2}}{2\omega_{k}L} \right),
    \label{eq:F2}
\end{align}
with analytic functions
\begin{align}
    \begin{split}
    & G_{n',n}\left(\frac{\alpha^{2}}{2\omega_{k}L}\right) \\ 
    & = \begin{cases}
        \sqrt{\frac{n!}{n'!}}L_{n}^{(n'-n)}\left(\frac{\alpha^{2}}{2\omega_{k}L}\right) & \text{for }n'\geq n,\\
        \sqrt{\frac{n'!}{n!}}\left(-1\right)^{n-n'}L_{n'}^{(n-n')}\left(\frac{\alpha^{2}}{2\omega_{k}L}\right) & \text{ for }n'<n.
        \end{cases}
    \end{split}
    \label{eq:G}
\end{align}
Finally, putting everything together, we find
\begin{align}
    \begin{split}
        \langle\Psi'| \twoOrd{V_{\alpha}(x)} |\Psi\rangle &= \delta_{\ell',\ell+\alpha R} \, \delta_{m', m} \\ & \times \mathrm{e}^{\mathrm{i}\left[2\pi\alpha Rm/L+\sum_{k \in \mathbb Z^*}(n'_{k}-n_{k})p_{k}\right]x} \\
        & \times \prod_{k \in \mathbb Z^*}G_{n'_{k},n_{k}}\left(\frac{\alpha^{2}}{2\omega_{k}L}\right),
    \end{split}
    \label{eq:V_mat_elmnts}
\end{align}
where, given that the harmonic mode frequencies are 
\begin{equation}
    \omega_{k}=|p_{k}|=\frac{2\pi}{L}|k|,
\end{equation}
the argument of the function $G$ becomes \mbox{$\alpha^{2}/(2\omega_{k}L)=\alpha^{2}/(4\pi|k|)$}.

\subsubsection*{Space integrated vertex operator}
\label{sub:spaceIntegratedInteractionOperator}

Setting $\beta=1/R$, we construct the cosine interaction potential as the sum of two normal-ordered vertex operators 
\begin{align}
    \twoOrd{\cos\beta\Phi(x)} = \frac{1}{2} \left( \twoOrd{V_{+\beta}(x)} + \twoOrd{V_{-\beta}(x)} \right).
    \label{eq:normalOrderedCosineInteraction}
\end{align}
Having computed the matrix elements of vertex operators in the free Hamiltonian basis, the last step required for the construction of the interaction Hamiltonian is the integration of the vertex operator over all space, which gives 
\begin{align*}
    & \int_{0}^{L} \mathrm{d}x \; \langle\Psi'| \twoOrd{V_{\pm\beta}(x)} |\Psi\rangle \\
    & = \int_{0}^{L} \mathrm{d}x \; \mathrm{e}^{\mathrm{i}\left[\pm 2\pi m/L + \sum_{k}(n'_{k} -n_{k})p_{k}\right]x} \\ & \times \delta_{\ell',\ell\pm 1} \, \delta_{m'm} \, \prod_{k \in \mathbb Z^*} G_{n'_{k},n_{k}} \left( \frac{\beta^{2}}{2\omega_{k}L} \right) \\
    & = L \delta_{\pm m+\sum_{k}(n'_{k}-n_{k})k,0} \, \delta_{\ell',\ell\pm 1} \, 
    \delta_{m',m} \, \prod_{k \in \mathbb Z^*} G_{n'_{k},n_{k}} \left( \frac{\beta^{2}}{2\omega_{k}L} \right).
\end{align*}
For $m = 0$ this step imposes conservation of the total momentum. 
Indeed the total momentum operator is
\begin{equation}
    P = \sum_{k \in \mathbb Z^*} p_{k} A_{k}^{\dagger}A_{k} 
\end{equation}
and its eigenvalues in basis states of Eq.~\eqref{eq:basis_states}
\begin{equation}
    P[\Psi] = \sum_{k \in \mathbb Z^*} n_{k} p_{k} = \frac{2\pi}{L}\sum_{k \in \mathbb Z^*} n_{k} k.
\end{equation}
In fact, the full space integrated vertex operator is thus momentum-preserving, i.e., it does not change the momentum of the state vector $\ket{\Psi}$ it acts upon. 
This property will be a central part in the construction of the operator in tensor network language.

\subsubsection{sG Hamiltonian}

Putting everything together and restricting to the \mbox{$M = 0$} sector of the Hilbert space, the matrix elements of the sG Hamiltonian $H_\mathrm{sG} = H_{0} + \lambda V$ can finally be expressed as
\begin{align}
    \begin{split}
    \langle\Psi'|\twoOrd{H_0}|\Psi\rangle &= \frac{1}{2}L\frac{\ell^{2}}{R^{2}}+\frac{(2\pi R)^{2}}{2L}m^{2} \\
    & + \left(\frac{2\pi}{L}\right)\sum_{k \in \mathbb Z^*} n_{k} |k|,
    \end{split}\\
    \begin{split}
    \langle\Psi'| V |\Psi\rangle  
    & = - \frac{1}{2} L \delta_{P[\Psi'],P[\Psi]} \, \left(\delta_{\ell',\ell+1} + \delta_{\ell',\ell-1}\right) \\
    & \times \prod_{k \in \mathbb Z^*} G_{n'_{k},n_{k}} \left(\frac{\beta^{2}}{4\pi|k|}\right).
    \end{split}
\end{align}
It is customary to express the coupling constant $\lambda$ in Eq.~\eqref{H_SG} in terms of the physical parameters of the model, specifically, the mass $M_s$ of the soliton, the fundamental particle of the model out of which composite particles (the breathers) are made, and the cosine frequency $\beta$, reparametrised as 
\begin{equation}
    \Delta = \frac{\beta^2}{8\pi}. 
\end{equation}
As mentioned in the main text, $\Delta$ is the \emph{RG scaling dimension} of the exponential operator and determines if it is RG relevant, marginal or irrelevant. 
The soliton mass depends on $\beta$ due to the `dressing' effect of the interaction on particle mass, as explained by RG theory. 
In terms of the above, the coefficient $\lambda$ is set to~\cite{ZAMOLODCHIKOV1995}
\begin{equation}
    \lambda = M_s^{2} \left(\frac{2 \pi}{M_s L}\right)^{2 \Delta}  \kappa(\Delta),
\end{equation}
where 
\begin{equation}
    \kappa(\Delta) = 
    \frac{2\Gamma(\Delta)}{\pi \Gamma(1-\Delta)}\left[\frac{\sqrt{\pi} \Gamma\left(\frac{1}{2-2 \Delta}\right)}{2 \Gamma\left(\frac{\Delta}{2-2 \Delta}\right)}\right]^{2-2 \Delta}.
\end{equation}

\subsubsection{sG energy gap}

In QFT the energy gap, i.e., the difference between the ground and first excited state energies in the thermodynamic limit, equals the mass of the lightest particle of the model. 
For $\beta<\sqrt{4\pi}$ the sG model is attractive, therefore there are one or more bound states between soliton--antisoliton pairs, corresponding to breathers. 
In this regime the energy gap is the mass of the first breather, which is predicted to be~\cite{ZAMOLODCHIKOV1995} 
\begin{equation}
    \Delta_\mathrm{sG} = m_\mathrm{B1} = 2 M_s \sin\left(\frac{\pi \Delta/2}{1-\Delta}\right) \quad (\beta<\sqrt{4\pi}).
\end{equation}
This particle consists of one soliton and one antisoliton. 
At the free fermion point $\beta=\sqrt{4\pi}$ the two particles become free and for $\beta>\sqrt{4\pi}$ their interaction is repulsive. 
In both cases there is no longer any bound state between a soliton--antisoliton pair and the energy gap equals the sum of their masses, i.e., 
\begin{equation}
    \Delta_\mathrm{sG} = 2 M_s  \quad (\beta\geq\sqrt{4\pi}). 
\end{equation}

\subsection{Massive Schwinger model} 
\label{app:MS}

The construction of the bosonised version of the mS model follows Refs.~\cite{schwinger_gauge_1962, coleman_more_1976}. 
The construction of the bosonised massless Schwinger model in finite space was explained in detail in Ref.~\cite{iso_hamiltonian_1990}. 
Similarly, the construction of the massive model is discussed in Ref.~\cite{Stuart2014}. 

The Hamiltonian of the bosonised mS model can be constructed similarly to the sG Hamiltonian with the difference, that the free Hamiltonian corresponds to a massive instead of massless bosonic field and the frequency parameter is set to $\beta = \sqrt{4\pi}$. 
More specifically, we have
\begin{equation}
    H_{\mathrm{mS}} = H_{0} + \lambda V
\end{equation}
with $H_{0}$ as the free massive boson Hamiltonian and $V$ the cosine interaction 
\begin{align}
    H_{0} & =\frac{1}{2}\int_{0}^{L}\mathrm{d}x\,\left(\Pi^{2}+(\partial_{x}\Phi)^{2}+M^{2}\Phi^{2}\right) ,\\
    V & = \int_{0}^{L} \mathrm{d}x \, \threeOrd{\cos(\sqrt{4\pi}\Phi - \theta)}.
\end{align}
The normal ordering of the interaction operator, denoted by $\threeOrd{\cdot}$, refers now to the massive momentum modes of $H_{0}$. 
The boson mass parameter is
\begin{equation}
    M=\frac{e}{\sqrt{\pi}},
\end{equation}
where $e$ is the fermion electric charge. 
The coupling parameter is 
\begin{equation}
    \lambda=-\frac{mM}{4\pi}\mathrm{e}^{\gamma}
\end{equation}
with the Euler constant $\gamma \approx 0.57721$. 
Here we have used the expression for the thermodynamic limit, i.e., for $L \rightarrow \infty$. 
As before, $L$ is the system size, and we choose units $\hbar=1, c=1$. 
The mode expansion for the massive Schwinger model is 
\begin{equation}
    \Phi(x)=\sum_{k \in \mathbb{Z}}f_{k}(x)\tilde{\Phi}_{k}
    \label{eq:massive_mode_expansion_phi}
\end{equation}
with 
\begin{equation}
    f_{k}(x)=\frac{1}{\sqrt{L}}\mathrm{e}^{\mathrm{i}p_{k}x}
\end{equation}
and momenta 
\begin{equation}
    p_{k}=2\pi k/L,\;k \in \mathbb{Z}.
\end{equation}
Note that there is no winding mode $M$ and the zero momentum mode is not singled out (and rescaled) as in the sine-Gordon model. 
Written in momentum space, the free Hamiltonian becomes
\begin{equation}
    H_{0}=\sum_{k}\omega_{k}A_{k}^{\dagger}A_{k} + \text{const.},
\end{equation}
where the harmonic mode frequencies $\omega_{k}$ are now given by the relativistic dispersion relation 
\begin{align}
    \omega_{k} &= E_{k} = \sqrt{p_{k}^{2}+M^{2}}.
    \label{eq:massive_disp_rel}
\end{align}
The basis vector are characterised by the sequence $\{n_{k}\}$ of occupation numbers of the modes 
\begin{equation}
    |\Psi\rangle = |\{n_{k}\}\rangle=\prod_{k \in \mathbb Z} \frac{A_{k}^{\dagger n_{k}}}{\sqrt{n_{k}!}}|0\rangle,\qquad k\in\mathbb{Z},n_{k}\in\mathbb{N}.
    \label{eq:massive_basis_states}
\end{equation}
Exponential operators of the $\Phi$ field are defined in the previous form according to
\begin{equation}
    V_{\alpha}(x)=\mathrm{e}^{\mathrm{i}\alpha\Phi(x)},
\end{equation}
but given the different mode expansion, the matrix elements of normal ordered vertex operators between arbitrary pairs of basis states now become
\begin{align*}
    \begin{split}
    \langle\Psi'| \, \threeOrd{V_{\alpha}(x)} |\Psi\rangle & = \langle\{n'_{k}\}| \threeOrd{\mathrm{e}^{\mathrm{i}\alpha\Phi(x)}} |\{n_{k}\}\rangle \\
    & = \prod_{k \in \mathbb Z} F(n'_{k},n_{k};z,-z^{*})
    \end{split}
\end{align*}
with the product including the zero momentum mode. 
The function $F$ is given by the previous expressions in Eq.~\eqref{eq:F} and Eq.~\eqref{eq:F2} with $z = \mathrm{i}\alpha f_{k}(x)/\sqrt{2\omega_{k}}$, but $\omega_{k}$ is now given by Eq.~\eqref{eq:massive_disp_rel}. 
Therefore, we find
\begin{align*}
    \langle\Psi'| \threeOrd{V_{\alpha}(x)} |\Psi\rangle & = \mathrm{e}^{\mathrm{i}\left[\sum_{k}(n'_{k}-n_{k})p_{k}\right]x} \, \prod_{k \in \mathbb Z} G_{n'_{k},n_{k}} \left( \frac{\alpha^{2}}{2\omega_{k}L} \right).
\end{align*}
The interaction Hamiltonian is constructed as a sum of spatially integrated vertex operators including the additional phase factors $\mathrm{e}^{\pm\mathrm{i}\theta}$, i.e.,
\begin{align*}
    V & = \int_{0}^{L} \mathrm{d}x \, \threeOrd{\cos(\sqrt{4\pi}\Phi - \theta)} \\
    & = \frac{1}{2}\sum_{\sigma=\pm}\mathrm{e}^{-\mathrm{i}\sigma\theta} \int_{0}^{L} \mathrm{d}x \, \threeOrd{V_{\sigma\sqrt{4\pi}}(x)}
\end{align*}
with matrix elements 
\begin{align*}
    \int_{0}^{L} \mathrm{d}x \; \langle\Psi'| \threeOrd{V_{\pm\sqrt{4\pi}}(x)} |\Psi\rangle & = L \delta_{\sum_{k}(n'_{k}-n_{k})k,0} \\
    & \times \prod_{k \in \mathbb Z} G_{n'_{k},n_{k}}\left(\frac{2\pi}{\omega_{k}L}\right).
\end{align*}
Using Eq.~\eqref{eq:massive_disp_rel}, the argument of the functions $G$ is 
\begin{equation}
    \frac{2\pi}{\omega_{k}L}=\frac{1}{\sqrt{k^{2}+\left(\frac{ML}{2\pi}\right)^{2}}}.
\end{equation}

\section{Tensor network details}

\subsection{Introduction}

Tensor networks are representations of quantum many-body systems originally developed in the context of condensed matter physics. 
They consist of networks of interconnected tensors, whose structure typically resembles the underlying physical lattice and moreover the entanglement structure of the system. 
Given the decomposition in terms of a contracted network of smaller tensors, TNs are highly efficient representations of the wave function. 
While the Hilbert space is exponentially large in the number of constituents, tensor networks overcome this limitation with only a polynomial scaling by targeting the low-entangled states. 
Their expressive power is mostly controlled and limited by the amount of entanglement the TN can capture. 
In one spatial dimension, the most prominent family of TNs is known as the \emph{matrix product state} (MPS). 
It is the TN structure we will employ for the study of $(1+1)d$ quantum field theories.

In our method, we employ two principal algorithms -- the \emph{density matrix renormalisation group} method (DMRG)~\cite{White1992,White1993} to obtain ground states and low-energy excited states, and the \emph{time-dependent variational principle} (TDVP)~\cite{Haegeman2011,Haegeman2014,PhysRevLett.100.130501} to compute the time evolution under a quench. 
In this appendix, we will present a detailed explanation of the full TN construction  including the simulation algorithms used.


\subsection{General setup}
\label{sub:generalSetup}

\subsubsection{Matrix product states}
\label{ssub:matrixProductStates}

Matrix product states represent the state vector of the quantum system in the full many-body Hilbert space. 
The probability amplitudes of the wave function are encoded in and recovered by a multiplication of the individual MPS tensors
\begin{align}
    \mathcal S_{n_1, n_2, \hdots, n_N} = \sum_{\lbrace \alpha \rbrace} \mathcal M_{\alpha_L, n_1}^{\alpha_1} \mathcal M_{\alpha_1, n_2}^{\alpha_2} \hdots \mathcal M_{\alpha_{N-1}, n_N}^{\alpha_R}.
    \label{eq:waveFunctionDecomposition_1}
\end{align}
The variables $n_i$ correspond to the physical degrees of freedom of the system. 
In order to simulate the quantum field theories in the paper, we use a momentum-space TN. 
Unlike the typical formulation, in which each MPS tensor represents one lattice site (or a small collection of multiple lattice sites), our MPS consists of tensors for each individual momentum modes. 
Therefore, the first control parameter in the system is the maximal momentum, labelled by the maximal wave number $k_\mathrm{max}$. 
It determines the number of momentum modes, and hence the length of the MPS. 
Since we employ periodic boundary conditions in real space, the possible momentum modes are given by
\begin{align}
    k \, \in \, \lbrack -k_\mathrm{max}, \hdots, +k_\mathrm{max} \rbrack.
\end{align}
Each site of the MPS corresponds to one momentum mode, labelled with wave number $k$. For some models we have to distinguish between the zero-mode with $k = 0$ and all other modes with $k \neq 0$. 
All non-zero modes are harmonic modes of a quantum harmonic oscillator, as described in Sec.~\ref{subsub:modeExpansionFockBasis}. 
In our truncated Fock basis, they are assigned a local bosonic Hilbert space $\mathcal H_k$ with physical dimension $d_k = \operatorname{dim}(\mathcal H_k)$, which can be occupied with a maximal number of $n(k)$ bosons. 
In this Fock basis each physical Hilbert space is labelled by the occupation numbers, i.e., $\mathcal H_k = \lbrace \ket{0}, \ket{1}, \hdots, \ket{n(k)} \rbrace$, and therefore has a dimension of $d_k = n(k) + 1$.
For the sG model the $k = 0$ mode is described by a different algebra as explained in Sec.~\ref{sec:SG_H0}. 
It is however bounded by an additional control parameter $n_\mathrm{ZM}$. The truncation of the formally infinite bosonic Hilbert spaces $\mathcal H_k$ is a second control parameter in the simulations. 
Imitating the energy-based truncation in the regular TCSA procedure, we implemented a non-uniform restriction of the maximal occupation numbers as visualised in Fig.~\ref{fig:truncMethods_1}.
\begin{figure}[ht]
    \centering
    \includegraphics[scale = 1.0]{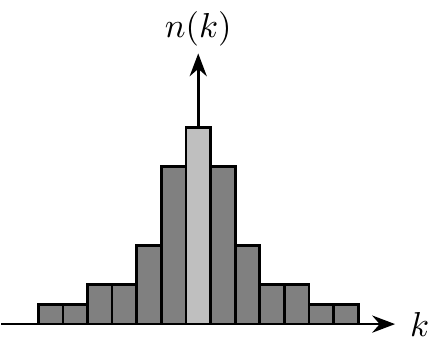}
    \caption{Truncation of the formally infinite bosonic Hilbert spaces for the harmonic modes. The non-uniform slope imitates the energy-based truncation in the regular TCSA method. Occupation of the zero-momentum mode is controlled by an additional parameter $n_\mathrm{ZM}$.}
    \label{fig:truncMethods_1}
\end{figure}
While the two truncations are not directly comparable, our procedure generally includes a much higher number of individual basis states. 
In order to generate the occupation number profile, we introduce an additional parameter $n_\mathrm{max}$, that limits the maximal occupation of non-zero modes in the system. 
A uniform distribution will have exactly $n_\mathrm{max}$ bosons per mode, however, in our generally used non-uniform profile the actual momentum-mode occupation numbers are then given by
\begin{align}
    n(k) = \left\lfloor \frac{n_\mathrm{max}}{\vert k \vert} \right\rfloor
    \label{n(k)}
\end{align}
for $k \neq 0$. 
Choosing $k_\mathrm{max} = 6$ and $n_\mathrm{max} = 8$ hence results in a MPS setup with modes as specified in Tab.~\ref{tab:occupationNumberProfile}.
\begin{table}[ht]
    \centering
    \begin{tabular}{c | c c c c c c c c c c c c c}
         $k$    & $-6$ & $-5$ & $-4$ & $-3$ & $-2$ & $-1$ & 0 & $+1$ & $+2$ & $+3$ & $+4$ & $+5$ & $+6$ \\
         \hline
         $n(k)$ & 1 & 1 & 2 & 2 & 4 & 8 & $n_\mathrm{ZM}$ & 8 & 4 & 2 & 2 & 1 & 1
    \end{tabular}
    \caption{Symmetric, non-uniform occupation number profile for the truncation of the local Hilbert spaces.}
    \label{tab:occupationNumberProfile}
\end{table}
For the zero-momentum mode we introduce a separate truncation parameter $n_\mathrm{ZM}$, that will be discussed individually for the 
sG and mS model.

Finally, we can set up the momentum-space MPS with the parameters $k_\mathrm{max}$, $n_\mathrm{max}$ and $n_\mathrm{ZM}$. 
An example for a system with $k_\mathrm{max} = 3$ is given in Fig.~\ref{fig:symmetricMPS_1}.
\begin{figure}[ht]
    \centering
    \includegraphics[scale = 0.975]{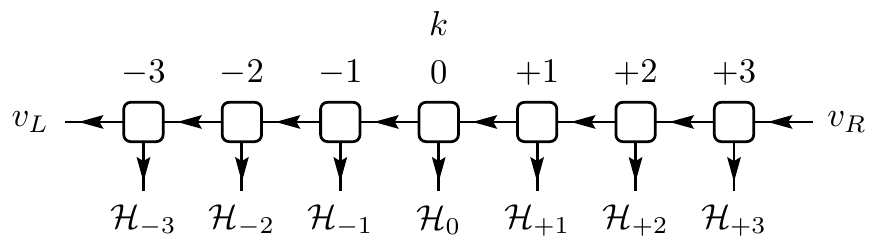}
    \caption{Momentum-space MPS. Each site corresponds to one momentum mode with wave number $k$ and physical Hilbert space $\mathcal H_k$. We consider a finite chain, so that the virtual bond indices $v_L$ and $v_R$ are one-dimensional. Arrows are only relevant for symmetric tensors, see Sec.~\ref{ssub:ConservationOfMomentum}.}
    \label{fig:symmetricMPS_1}
\end{figure}
Since vertical indices correspond to the physical degrees of freedom, they are called physical indices. 
Horizontal indices are called virtual indices, they are the ones which connect the individual tensors and mediate quantum correlations between them. 
The maximal dimension of all virtual indices is denoted as the \emph{bond dimension} $\chi_\text{max}$ of the MPS. 
For an MPS in the truncated Hilbert space defined by a fixed set of $k_\mathrm{max}$, $n_\mathrm{max}$ and $n_\mathrm{ZM}$, the bond dimension is the most relevant tuning parameter in the system. 
It controls the amount of entanglement the TN can capture and in turn the level of approximations to the true target state. 
So far the TN ansatz is non-symmetric, meaning that the individual three-index tensors are regular dense numerical arrays.
Arrows in Fig.~\ref{fig:symmetricMPS_1} have no meaning, until we impose quantum number conservation in the next section. 
Since we work with a finite MPS, the left and right virtual edge vector spaces are one-dimensional and can be set to the vacuum, i.e., $v_L = v_R' = \ket{0}$. 
In the next section we will generalise the concepts introduced above to operators, such that we can implement the Hamiltonians for the models of interest.


\subsubsection{Matrix product operators}
\label{ssub:matrixProductOperators}

Tensor networks are not only suited to represent a quantum state of a many-body system, they can also efficiently represent operators in a similar fashion. 
An operator $\hat O$ acting on the state vector $\ket{\psi}$ represented by an MPS can be implemented by a \emph{matrix product operator} (MPO). 
The transition amplitudes are encoded in and recovered by a multiplication of the
individual MPO tensors
\begin{align}
    \mathcal O_{n_1, n_2, \hdots n_N}^{n_1^\prime, n_2^\prime, \hdots, n_N^\prime} = \sum_{\lbrace \alpha \rbrace} \mathcal T_{\alpha_L,n_1}^{n_1^\prime,\alpha_1} \mathcal T_{\alpha_1,n_2}^{n_2^\prime,\alpha_2} \hdots \mathcal T_{\alpha_{N-1},n_N}^{n_N^\prime,\alpha_R}.
\end{align}
Each local tensor now has two physical indices alongside the two virtual ones, that connect them among each other. 
A visualisation in tensor network notation is shown in Fig.~\ref{fig:symmetricMPO_1}.
\begin{figure}[ht]
    \centering
    \includegraphics[scale = 0.975]{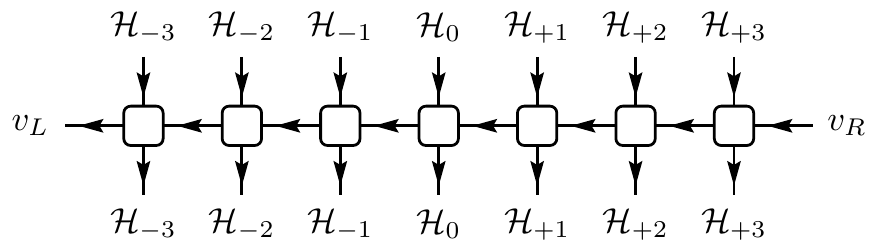}
    \caption{Momentum-space MPO. The momentum modes have the same $k$ labels as well as the same physical Hilbert spaces $\mathcal H_k$ as the MPS in Fig.~\ref{fig:symmetricMPS_1}. 
    For a finite system the virtual bond indices $v_L$ and $v_R$ are one-dimensional. 
    Arrows are only relevant for symmetric tensors, see Sec.~\ref{ssub:ConservationOfMomentum}.}
    \label{fig:symmetricMPO_1}
\end{figure}
The bond dimensions of the MPO, i.e., the dimensions of the virtual indices depend on the operator it encodes. 
While short-range Hamiltonians for quantum lattice systems can be implemented with a bond dimension $\chi_\mathrm{MPO}$ in the order of ten, our momentum-space Hamiltonian MPO will require larger spaces in the order of several hundreds.\\

The multiplication of an MPO with an MPS is one of the fundamental operations. 
The outcome is again an MPS with a larger bond dimension than the original one, since the individual virtual bond dimensions get multiplied. 
In order to keep the bond dimension finite and the simulations manageable, a \emph{truncation} of the final MPS is essential. 
We will comment on the most significant truncation of our method in Sec.~\ref{sub:densityMatrixRenormalizationGroup}.

\subsubsection{Computation of expectation values}

Finally, TNs and in particular MPS offer a very convenient way to compute expectation values. 
The network architecture is flexible with regard to single- and multi-site observables, as well as global operators in the form of MPOs. 
For instance, the energy expectation value $E = \braket{\psi \vert H \vert \psi}$ can be easily computed, as all parts can be conveniently expressed in TN form. 
It is visualised in Fig.~\ref{fig:expectationValues_MPS_1}, where the Hamiltonian MPO is placed between the state vector $\ket{\psi}$ and its conjugate. 
\begin{figure}[ht]
    \centering
    \includegraphics[scale = 0.975]{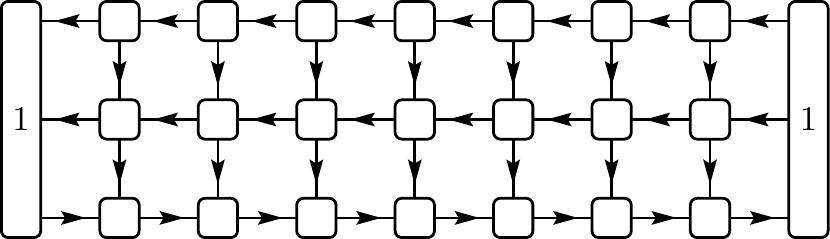}
    \caption{Calculation of MPO expectation values in the TN representation. Here we assume the MPS to be properly normalised to $\braket{\psi \vert \psi} = 1$, so that there is no denominator to the expectation value.}
    \label{fig:expectationValues_MPS_1}
\end{figure}
The outermost MPS and MPO indices are one-dimensional and the full network can be terminated by identity boundary tensors.


\subsubsection{Conservation of momentum}
\label{ssub:ConservationOfMomentum}

For both models discussed in the paper, the sine-Gordon model and the massive Schwinger model, translation invariance of the Hamiltonian implies conservation of the total momentum. 
This is implemented by a global symmetry constraint, which forces the MPS to be in a sector with fixed total momentum, given by
\begin{align}
    P_\mathrm{tot} :=  \frac{2\pi}{L} \sum_{k = -k_\mathrm{max}}^{+k_\mathrm{max}} k n_k = \text{const}.
\end{align}
Fortunately, tensor networks offer a convenient and very efficient way to implement physical symmetries (such as $\mathbb Z_n$, $U(1)$ or $SU(2)$) at the level of the tensors~\cite{STN_0,STN_1,Schmoll2018,Silvi2019}. 
Besides fixing the symmetry sector for the full many-body quantum state, the symmetry preservation leads to a sparse internal block structure of the tensors, which in turn results in a computational speed-up. 
Upon imposing symmetry constraints, all tensor indices are labelled by the irreducibly representations of the symmetry group according to
\begin{align}
    \mathbb V \cong \bigoplus_a \mathbb V^a_{d_a}.
    \label{eq:vectorSpaceDecomposition}
\end{align}
Here $a$ labels the irreducible representations with vector spaces $\mathbb V^a$, and $d_a$ denotes the degeneracy of each element in Eq.~\eqref{eq:vectorSpaceDecomposition}. 
The total vector space dimension is given by
\begin{align}
    \text{dim}(\mathbb V) = \sum_a d_a \cdot \text{dim}(\mathbb V^a).    
\end{align}
For the underlying $U(1)$ symmetry used in our momentum-conserving MPS, the irreducible representations have no intrinsic degeneracy, such that $\text{dim}(\mathbb V^a) = 1$ for all $a$. 
The direction of a tensor index, denoted by the arrows in the TN diagrams, determines how the symmetry group acts on the index~\cite{STN_0}. 
In practise it means, that an outgoing index can only be contracted with an incoming index, and vice versa. 
In order to construct the wave function in a fixed momentum sector, the full MPS is constructed out of momentum-conserving tensors. 
This can be achieved by a modified $U(1)$ symmetry. 
A regular $U(1)$ symmetry corresponds to particle number conservation, which is the underlying symmetry for the Fock space of the momentum modes. 
However, the occupation numbers have to be scaled by the corresponding momentum, so that global momentum conservation can be imposed by a weighted $U(1)$ symmetry. 
As an example we consider a local MPS tensor for momentum $k = -3$ and $n(-3) = 4$. The physical Hilbert space is then labelled by the basis states
\begin{align}
    \mathcal H_{-3} = \big\lbrace \ket{0_1}, \ket{-3_1}, \ket{-6_1}, \ket{-9_1}, \ket{-12_1} \big\rbrace,
\end{align}
where we have used $\ket{a_{d_a}}$ to denote quantum numbers and their degeneracy. 
This procedure is also applied to the virtual indices of the MPS tensors, such that they are labelled by momentum quantum numbers (QN), too. 
The possible QN configurations on the virtual indices can be constructed from the physical spaces and the outermost virtual spaces. 
For a total zero-momentum state we set $v_L = v_R = \ket{0_1}$, while a state in a sector with arbitrary momentum $P = 2\pi k/L$ can be generated by \mbox{$v_L = \ket{0_1}$} and \mbox{$v_R' = \ket{k_1}$}. 
In the next section we will briefly outline the density matrix renormalisation group procedure used to compute ground and low-energy excited states for one-dimensional TNs.


\subsection{Density matrix renormalisation group}
\label{sub:densityMatrixRenormalizationGroup}

At the heart of our method lies the DMRG algorithm, which is used to optimise the MPS coefficients and obtain ground states and low-energy excited states~\cite{Schollwoeck2011,Stoudenmire2012}. 
To this end we employ a regular two-site DMRG code, with additional modifications required due to obstacles coming from the conservation of momentum. 
A two-site version of DMRG is beneficial for symmetric simulations, as is provides better automatic handling of quantum numbers. 

Assuming a Hamiltonian given as an MPO, the idea of DMRG is to find a quantum state vector $\ket{\psi_0}$ in the form of an MPS, that minimises the energy
\begin{align}
    E_0 = \text{min}_{\ket{\psi}} \frac{\braket{\psi \vert H \vert \psi}}{\braket{\psi \vert \psi}}.
    \label{eq:optimalDMRG_1}
\end{align}
We search for the optimal state starting from a random initial MPS and progressively optimising the tensor coefficients to lower the state's energy expectation value. 
Throughout the full procedure, the MPS can be kept in a suitable canonical form and such that it is normalised with $\braket{\psi \vert \psi} = 1$. 
Importantly, the optimisation in Eq.~\eqref{eq:optimalDMRG_1} does not need to be performed globally. 
Instead, it is performed sequentially, fixing everything except two neighbouring tensors with wave numbers $(k,k+1)$ and optimising the parameters of these tensors at a time. 
This process is visualised in Fig.~\ref{fig:DMRG_1}.
\begin{figure}[ht]
    \centering
    \includegraphics[scale = 0.975]{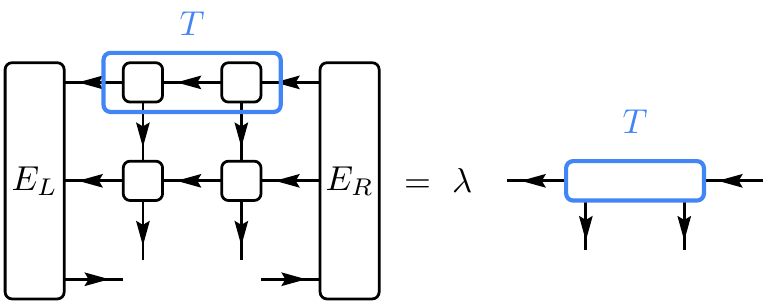}
    \caption{Two-site update in the DMRG algorithm. 
    The parts on the left and right of the two active sites are exactly represented by effective environment tensors $E_L$ and $E_R$. 
    The updated two-site tensor $T$ is obtained by solving this equation for the eigenstate with lowest eigenvalue $\lambda$.}
    \label{fig:DMRG_1}
\end{figure}
The finite MPS chain is terminated by identity boundary tensors. 
For each two-site update, the parts left of mode $k$ and right of mode $k+1$ can be exactly and efficiently represented by environment tensors $E_L$ and $E_R$ respectively. 
The updated two-site tensor $T$ is then given by the eigenstate of the equation in Fig.~\ref{fig:DMRG_1} with lowest eigenvalue. 
Since two sites are updated simultaneously, tensor $T$ needs to be decomposed back into separate MPS tensors. 
This is done with a \emph{singular value decomposition} (SVD)
\begin{align}
    \begin{split}
        \includegraphics[scale = 0.975]{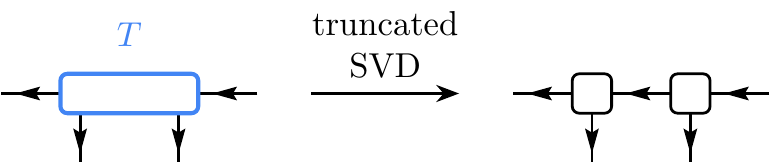}
    \end{split},
    \label{eq:twoSiteTruncation}
\end{align}
in which the singular values are truncated based on a maximal bond dimension $\chi_\text{max}$ or some truncation error criteria. 
Importantly, the truncation is done in each symmetry sector of the matrix, such that they can be automatically selected based on their singular weights. 
In each truncation, a truncation error can be defined as the norm of the discarded singular values with
\begin{align}
    \epsilon_D^2 = {\sum_{i > \chi} \lambda_i^2}.
\end{align}
Here, $\lambda_i$ are the singular values arranged in descending order and the subscript $\epsilon_D$ refers to the DMRG truncation parameter. 
In our simulations, we set an error-based truncation of $\epsilon_D = 10^{-6}$ together with a strict upper limit for the bond dimension.\\

The two-site update procedure is repeated by sweeping from left to right and back through the MPS, until the full state converges. 
In a left (right) sweep the right (left) environment tensors are updated by absorbing the column of tensors for mode $k+1$ ($k$) into the right (left). 
After convergence, we compute the energy variance according to
\begin{align}
    \begin{split}
        \ket{\phi} &= H\ket{\psi_0} - E\ket{\psi_0} ,\\
        \text{var}(E) &= \braket{\phi \vert \phi} = \braket{\psi_0 \vert H^2 \vert \psi_0} - E^2,
    \end{split}
\end{align}
with $E = \braket{\psi_0 \vert H \vert \psi_0}$. 
The variance determines how close the optimised MPS is to an eigenstate of the Hamiltonian. 
It can diagnose insufficient bond dimension and, more importantly for our method, if the MPS virtual indices contain the relevant quantum numbers to best represent the target state.


\subsubsection{Low-energy excited states}

Excited states can be of two kinds. 
The first kind is a regular eigenstate in the same momentum sector as the ground state. 
In this case the DMRG code for ground states can be extended to compute those low-lying excited states, too. 
To this end, the algorithm is set up to find the state with lowest energy, that is orthogonal to lower energy states~\cite{Stoudenmire2012}, such as the ground state. 
While the first excited state is found by orthogonalisation to the ground state, the second excited state needs to be orthogonal to both the ground and the first excited state, and so on.\\

The second kind of excited state is an eigenstate in a different momentum sector. 
In this case, we can use a regular ground state DMRG simulation, where the MPS boundary vector spaces $v_L$ and $v_R$ are set to target the corresponding total momentum sector.


\subsubsection{Global subspace expansion}

In regular DMRG simulations with symmetry-preserving tensors, a two-site update is advantageous due to the automatic selection of relevant symmetry sectors during the truncation in Eq.~\eqref{eq:twoSiteTruncation}. 
Moreover, it is a convenient way to increase the bond dimension on the virtual indices by introducing new, previously absent symmetry sectors. 
The algorithm is hence able to adapt the set of quantum numbers dynamically based on their weights in the singular value spectrum. 
Unfortunately, in our setting, the different wave vectors $k$ lead to non-uniform Hilbert spaces $\mathcal H_k$, which impede the automated handling of quantum numbers. 
While less relevant sectors can still be excluded in the truncation, a regular two-site DMRG implementation is no longer able to introduce all required quantum numbers automatically. 
A local subspace expansion~\cite{Hubig2015} is also not possible, since the determination of new sectors that need to be included expands beyond the two-site region to a global problem. 
Therefore, we directly implement a global subspace expansion based on the full Hamiltonian. 
Starting from a random state vector $\ket{\psi}$ with some initial set of quantum numbers on the virtual bonds, we multiply it with the Hamiltonian and use a moderate truncation to compress the resulting MPS. 
This introduces new QNs before the DMRG sweeps are performed. 
Once sufficiently many QNs are present, the two-site optimisation algorithm itself is able to introduce new ones, so that the global subspace typically has to be performed only once at the start of the procedure.


\subsection{Time-dependent variational principle}

Conveniently, the MPS framework can also be used to perform time evolution of the full many-body quantum state. 
For this task we have implemented a two-site TDVP algorithm~\cite{Haegeman2011,Haegeman2014}. The algorithm itself is similar to two-site DMRG presented above. 

However, instead of solving Eq.~\eqref{fig:DMRG_1} for the eigenstate with lowest energy, a computation of the evolution operator by exponentiation is used to solve the time-dependent Schrödinger equation locally. 
Moreover, an additional backwards evolution step on the second site (the left (right) site in a left (right) sweep) is necessary to restore it to original time, before proceeding with the sweep~\cite{Haegeman2014}. 
Similarly to DMRG, a two-site approach is used to choose the relevant symmetry sectors dynamically. 
However, the problem of introducing new QNs in the first place persists also in the TDVP code. 
For this reason we employ a global Krylov subspace expansion technique~\cite{Yang2020} prior to the actual TDVP sweeping.

Here we will briefly outline the main steps of the time evolution algorithm in order to introduce the relevant truncation parameters. 
Evolving a state vector $\ket{\psi(t)}$ to $\ket{\psi(t + \Delta t)}$ is performed by the following steps~\cite{Yang2020}.
\begin{itemize}
    \item Construct $k$ different global Krylov vectors by successively applying the Hamiltonian MPO to $\ket{\psi(t)}$, controlled by a truncation error $\epsilon_K$. 
    \item Expand the basis of $\ket{\psi(t)}$ by a portion of the $k$ global Krylov vectors. This step is controlled by a truncation error $\epsilon_M$. 
    \item Perform a regular two-site TDVP sweep from left to right and back. The truncation error in the SVD is controlled by $\epsilon_T$.
\end{itemize}
Overall, there are five different refinement parameters in the TDVP algorithm. 
The smallest time step $\Delta t$, the number of Krylov vectors $k$ and the three truncation error parameters. 
For the two-site TDVP update we choose a time step \mbox{$\Delta t = 2\cdot 10^{-2}$} and an error-based truncation \mbox{$\epsilon_T = 10^{-4}$} together with a hard bond dimension limit $\chi = 2500$. 
In our setting, the parameters $k = 2$, \mbox{$\epsilon_K = 10^{-8}$} and \mbox{$\epsilon_M = 10^{-10}$} turned out to be reasonable to achieve sufficient accuracy, in turn leading to a high computational cost. 
To keep the simulations manageable, a upper bond dimension limit of $\chi = 3000$ was required for the construction of the Krylov vectors.


\subsection{Implementation of the Hamiltonian}

The Hamiltonian is constructed as two independent parts, a non-interacting and an interacting part. 
The non-interacting part can be conveniently constructed using the regular formalism for local Hamiltonians based on finite state machines~\cite{finiteStateMachineMPO1,finiteStateMachineMPO2}. 
For the interacting part however, we will introduce a supporting TN structure called the Kronecker-Delta MPS. 
As explained in the main text and the method section, it will be an important tool to ensure a fully momentum-preserving construction of the MPO.


\subsubsection{Non-interacting Hamiltonian}

The non-interacting part of the sine-Gordon and the massive Schwinger model can be constructed on the same footing. 
We start with the general Hamiltonian in Eq.~\eqref{eq:normalOrderedFreeHamiltonianSG} consisting of harmonic modes for non-zero momentum modes and a separate zero-mode contribution, i.e.,
\begin{align}
    H_0 = \sum_{k \in \mathbb Z^*} E_k A_k^\dagger A_k + \mathcal O
    \label{eq:KleinGordonHamiltonian}
\end{align}
with the generally massive energy-momentum relations
\begin{align}
    p_k = \frac{2\pi}{L} k, \hspace{0.5cm} E_k = \sqrt{p_k^2 + M^2}.
\end{align}
Here $M = e/\sqrt{\pi}$ is the effective boson mass and $e$ the electric charge. 
The zero-mode operator for the sG model is given by
\begin{align*}
    \mathcal O_\mathrm{sG} = \frac{1}{2} L \tilde\Pi_0^2,
\end{align*}
that can be represented in the $(2n_\mathrm{ZM} + 1)$-dimensional truncated eigenspace of the $\tilde\Pi_0$ operator according to Eq.~\eqref{eq:zero-mode_eigstates_1} as 
\begin{align*}
    \tilde\Pi_0 = \frac{1}{R} \begin{pmatrix} - n_\mathrm{ZM} & 0 & \hdots & 0 & 0 \\
    0 & - n_\mathrm{ZM} + 1 & 0 & \ddots & 0 \\
    \vdots & 0 & \ddots & 0 & \vdots \\
    0 & \ddots & 0 & +n_\mathrm{ZM} - 1 & 0 \\
    0 & 0 & \hdots & 0 & +n_\mathrm{ZM}
    \end{pmatrix}.
\end{align*}
In the mS model the zero-mode is an additional harmonic mode described by
\begin{align*}
    \mathcal O_\mathrm{mS} = M A_0^\dagger A_0,
\end{align*}
that acts on a $(n_\mathrm{ZM} + 1)$-dimensional Fock space. 
In the momentum-space TN representation, Eq.~\eqref{eq:KleinGordonHamiltonian} is just a sum of local terms, that can be implemented by an MPO with bond dimension two. 
The MPO tensors for the zero and non-zero modes are shown in Fig.~\ref{fig:mpo_H0_1}.
\begin{figure}[ht]
    \centering
    \includegraphics[scale = 1.0]{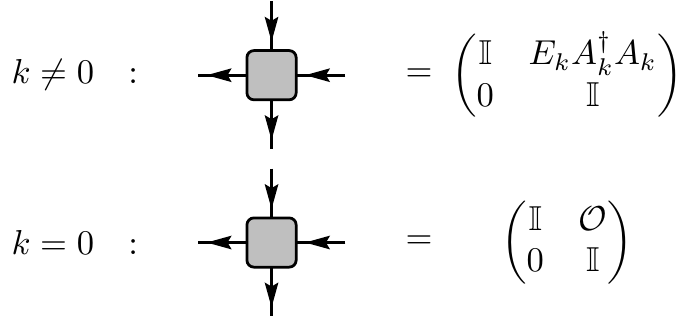}
    \caption{MPO form of the non-interacting Hamiltonian $H_0$. The virtual bond dimension is $\chi_\mathrm{MPO} = 2$, the physical bond dimensions depend on the maximal occupation $n(k)$ of each mode.}
    \label{fig:mpo_H0_1}
\end{figure}
The left and right MPO tensors are terminated by edge vectors
\begin{align*}
    e_L = \begin{pmatrix} 1 & 0 \end{pmatrix} \hspace{0.5cm} \text{and} \hspace{0.5cm} e_R = \begin{pmatrix} 0 \\ 1 \end{pmatrix}. 
\end{align*}
The ground state of $H_0$ is given by the vacuum state with no bosonic occupations and the zero-eigenstate for the $\mathcal O$ operator.


\subsubsection{Kronecker-Delta MPS}

Before continuing with the interacting part of the Hamiltonian, we introduce a supporting TN structure called the Kronecker-Delta MPS. 
It is one of the most important pieces in the TN construction, as it automatically generates \emph{all} possible interaction terms of the vertex operator while ensuring momentum conservation in the construction. 
The origin and necessity of this global symmetry constraint has been explained in Sec.~\ref{app:modelDefinition}, here we will present its actual implementation. 
The Kronecker-Delta MPS ($\delta$-MPS) has a similar form as the general momentum-space MPS ansatz in Fig.~\ref{fig:symmetricMPS_1}. 
However, instead of representing a wave function, it is constructed analytically to generate an \emph{equal-weight-one} superposition for all the possible QN combinations of its physical Hilbert spaces, that satisfy a global symmetry constraint. 
This means that the probability amplitudes for each possible QN combination in the $\delta$-MPS is exactly unity, which is going to weight the different interacting Hamiltonian terms evenly. 
The concrete physical Hilbert spaces will depend on the operator we wish to implement, for now we take them as general vector spaces with \emph{non-degenerate} QNs denoted by $\Delta_k$. 
The zero-mode hence only carries a single quantum number $\Delta_0 = \ket{0_1}$, because no momentum can be transferred away from or to this site. 
The virtual indices of the $\delta$-MPS are determined successively by \emph{all} possible combinations of quantum numbers, starting at the edges of the $\delta$-MPS and moving towards the center. 
Explicitly, they are constructed as $v_k^R = v_k^L \otimes \Delta_k$ for $k < 0$ and $v_k^L = \Delta_k^\prime \otimes v_k^R$ for $k > 0$, where $v_k^L$ and $v_k^R$ are the left and right vector spaces of the individual MPS tensors. 
Naturally we have $v_k^R = v_{k+1}^L$ and $v_k^L = v_{k-1}^R$. 
Importantly, degeneracies that arise in the fusion of the (modified) $U(1)$ quantum numbers have to be omitted, since they would modify the probability amplitudes in the $\delta$-MPS, therefore skewing the MPO construction. 
Finally at the central site, the left virtual, physical and right virtual vector spaces are combined into a total fusion space $\mathbb V_F$. Assuming $v_L = v_R' = \ket{0_1}$, it is given by
\begin{align}
    \begin{split}
        \mathbb V_F &= \Bigg\lbrack \underbrace{\left( \bigotimes_{k<0} \Delta_k \right)}_{v_0^L} \otimes \, \Delta_0 \otimes \underbrace{\left( \bigotimes_{k>0} \Delta_k \right)}_{\left( v_0^R \right)^\prime} \Bigg\rbrack_1.
    \end{split}
\end{align}
Here the notation $\lbrack \hdots \rbrack_1$ denotes, that all degeneracies are removed such that the QNs in $\mathbb V_F$ all have $d_a = 1$. 
The fusion space $\mathbb V_F$ is shown as the top (blue) index of the $\delta$-MPS in Fig.~\ref{fig:kronDeltaMPS_2}.
\begin{figure}[ht]
    \centering
    \includegraphics[scale = 0.975]{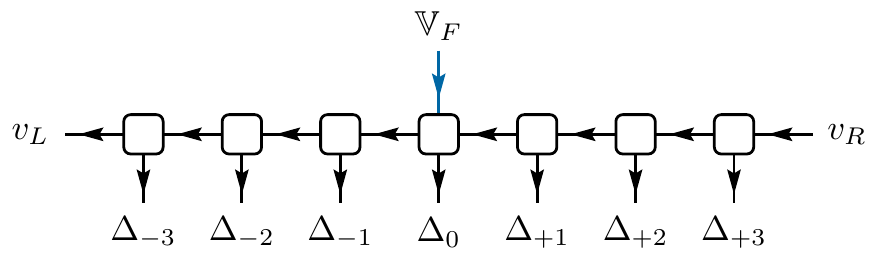}
    \caption{Kronecker-Delta MPS. The top blue index carries the (non-degenerate) product of all QNs on the physical vector spaces $\Delta_k$. Boundary vector spaces are typically set to be the vacuum.}
    \label{fig:kronDeltaMPS_2}
\end{figure}
With all the virtual and the fusion vector spaces constructed, the $\delta$-MPS carries \emph{only} non-degenerate quantum numbers on all the links. 
The individual symmetric blocks of the $\delta$-MPS are then set to one, such that it generates the desired equal-weight superposition of all the possible QN combinations.

For the construction of MPOs for operators that preserve the total momentum, it is required to restrict the total fusion space to $\mathbb V_F = \ket{0_1}$, which ultimately implements the global symmetry constraint. 
Since in this case $\mathbb V_F$ is one-dimensional with a trivial vector space, it can be omitted altogether.


\subsubsection{Interacting Hamiltonian}

The explicit forms of the interacting part of the sG and mS Hamiltonians have been presented in Sec.~\ref{app:SG} and Sec.~\ref{app:MS}. 
We will now construct the full interacting MPO for those models, by combining the local effect of the vertex operators with the global $\delta$-MPS. 
In general, the action of the vertex operator on each momentum mode can be implemented by a three-index tensor. 
The precise form of these tensors however might differ for zero and non-zero modes, similarly to the non-interacting Hamiltonian.\\

In the sG model, the zero-mode has to be treated differently from the rest. 
Its algebra is determined by Eq.~\eqref{eq:vertexOperatorZeroMode_sG}, that can be implemented by an off-diagonal matrix coupling eigenstates that differ by one (see Sec.~\ref{sub:spaceIntegratedInteractionOperator}). 
For the non-zero modes in the sG model and all modes in the mS model, the local three-index tensors are filled with matrix elements as defined in Eq.~\eqref{eq:G}. 
\begin{figure}[ht]
    \centering
    \includegraphics[scale = 1.0]{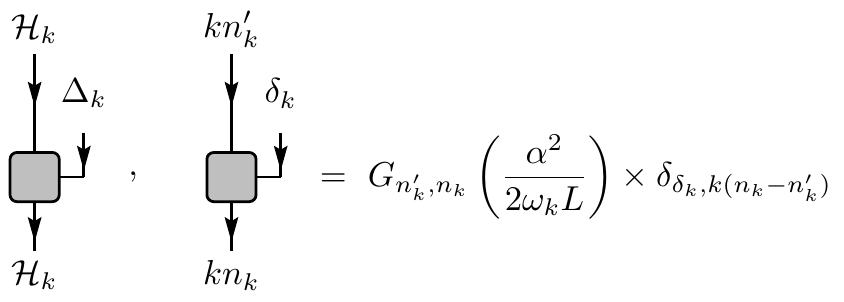}
    \caption{General form of the local three-index tensors as well as the individual tensor elements to encode the action of the vertex operator. The additional index $\delta_k$ takes all values of possible momentum transfer between the bra and ket index. Tensor elements are given by analytic functions, as defined in Eq.~\eqref{eq:G}.}
    \label{fig:mpo_H1_3}
\end{figure}
Here, $n_k^\prime$ and $n_k$ denote individual occupation numbers for the bra and ket indices. 
The third, auxiliary index is required in order to allow coupling of different occupations $n_k^\prime$ and $n_k$ (a regular $U(1)$-symmetric two-index local tensor is forced to be diagonal by symmetry, thus only allowing terms for which $n_k^\prime = n_k$).

Therefore, the third index determines the local momentum transfer between bra and ket indices, given by \mbox{$\delta_k = k(n_k - n_k^\prime)$}. 
The general form of the local three-index tensors as well as the individual tensor elements are shown in Fig.~\ref{fig:mpo_H1_3}. 
The momentum transfer index can take all possible values, i.e.,
\begin{align}
    \Delta_k &= \left\lbrack \mathcal H_k \otimes \mathcal H_k^\prime \right\rbrack_1 \label{eq:def_delta_k} \\
    &= \lbrace \ket{(-k n(k))_1}, \ket{(-k (n(k) - 1))_1}, \hdots, \ket{(+k n(k))_1}\rbrace. \nonumber
\end{align}    
A specific tensor element for fixed $n_k$, $n_k^\prime$ has then to be put into the corresponding position for the momentum difference. 

Collecting all tensors for the vertex operator, the first stage of the construction of the interacting MPO is shown in Fig.~\ref{fig:interactingMPO_2}.
\begin{figure}[ht]
    \centering
    \includegraphics[scale = 1.0]{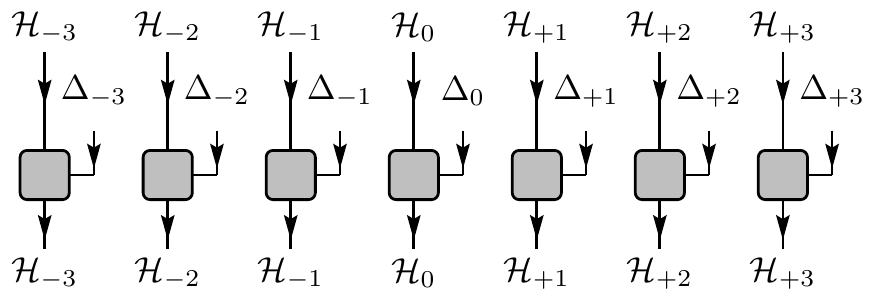}
    \caption{Product of local vertex operators as part of the construction of the interacting MPO.}
    \label{fig:interactingMPO_2}
\end{figure}
In the final step, the local momentum transfers, encoded into the auxiliary indices $\delta_k$ need to be coupled to ensure global momentum conservation (see Sec.~\ref{sub:spaceIntegratedInteractionOperator}). 
To this end we construct a $\delta$-MPS with physical indices given by $\Delta_k$. 
Ultimately, the full space-integrated vertex operator can be expressed as an MPO in Fig.~\ref{fig:interactingMPO_4}.
\begin{figure}[ht]
    \centering
    \includegraphics[width = \columnwidth]{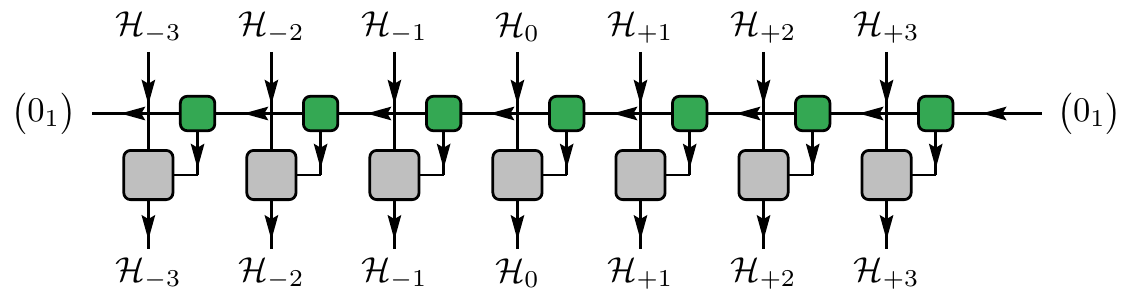}
    \caption{Full construction of the vertex operator in MPO form. By combining local interaction tensors with momentum carryover and a global projection operator in the form of an equal-weight superposition $\delta$-MPS, four index MPO tensors can be formed locally.}
    \label{fig:interactingMPO_4}
\end{figure}
The MPO bond dimension $\chi_\mathrm{MPO}$ of the vertex operator is therefore solely determined by the bond dimension of the Kronecker-Delta MPS. 
The cosine self-interaction can finally be constructed as the sum of two vertex operator MPOs with opposite frequencies, see Eq.~\eqref{eq:normalOrderedCosineInteraction}.


\subsection{Complexity analysis}

Let us now analyze the complexity of the tensor network algorithms. 
The algorithmic bottle neck for the DMRG as well as the TDVP routine is the evaluation of the local 2-site patch, as visualised in Fig.~\ref{fig:DMRG_1}. 
Denoting by $n(k)+1$ the physical dimension of the local mode $k$ and using $n=\max_k(n(k)+1)$, denoting by $\chi$ the MPS bond dimension and by $D$ the MPO bond dimension, then the left hand side of Fig.~\ref{fig:DMRG_1} can be computed with $O(\chi^3Dn+\chi^2D^2n^2)$ arithmetic operations. 
This translates to $O(k_{\mathrm{max}}(\chi^3Dn+\chi^2D^2n^2))$ arithmetic operations per DMRG, or similarly TDVP sweep. 
The parameters $k_{\mathrm{max}}$, $n_{\mathrm{max}}$ and $\chi$ are inputs to the algorithm, where $n(k) = \left\lfloor n_\mathrm{max}\vert k \vert^{-1} \right\rfloor\leq n_{\mathrm{max}}$ for $k\neq0$ and $n(0)=n_{\mathrm{ZM}}$. 
In contrast, $D$ is defined by the bond dimension of the $\delta$-MPS. 
To see this, note that $H=H_0+\lambda V$ where $H_0$ can be written as an MPO with bond dimension two and where $\lambda V$ can be written as an MPO with twice the bond dimension given by the $\delta$-MPS bond dimension $D_\delta$, such that $D=2D_\delta+2$. 
To obtain the desired complexity upper bound for both of the algorithms, it remains to bound $D_\delta$. 

To do so, we switch from the algebraic construction of the $\delta$-MPS to a combinatorial derivation. 
The algebraic approach has the advantage of being valid for any symmetry group. 
The combinatorial approach, in contrast, only works for groups isomorphic to $\mathbb{Z}_n$ (corresponding here to the truncated mode spaces) and products thereof. 
However, the combinatorial approach is more useful for the present analysis. 
More precisely, in what follows we are going to exploit a linear representation of addition to construct the $\delta$-MPS, which works for any ordering of modes and by construction -- because we represent addition on $\mathbb{Z}_K$ with $K$ corresponding to the relevant information -- has the optimal bond dimension. 

\begin{lemma}
The $\delta$-MPS can be constructed by an MPS with bond dimension $D_\delta = K\leq 4 k_{\mathrm{max}} n_{\mathrm{max}}+1$.
\end{lemma}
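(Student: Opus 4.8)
The plan is to recast the analytic $\delta$-MPS construction as a \emph{finite-state machine} that tracks the running partial sum of momentum transfers along the chain, and to bound the number of states this machine must distinguish. The $\delta$-MPS is the equal-weight superposition over all momentum-transfer configurations $\{\delta_k\}$ obeying the global constraint $\sum_k \delta_k = 0$, where on each mode $\delta_k = k(n_k - n_k')$ with $n_k, n_k' \in \{0,\dots,n(k)\}$. The natural virtual label at any cut is then the accumulated partial sum $s_j := \sum_{k\le j}\delta_k$, and the claim reduces to counting the distinct values that $s_j$ can take.

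First I would record the elementary per-mode bound. Since $\delta_k$ ranges over $\{-|k|n(k),\dots,+|k|n(k)\}$, one has $|\delta_k| \le |k|\,n(k) = |k|\lfloor n_{\mathrm{max}}/|k|\rfloor \le n_{\mathrm{max}}$, while the zero mode contributes $\delta_0 = 0$. Summing over the $2k_{\mathrm{max}}$ non-zero modes yields the \emph{window} bound $\Lambda := \sum_k |k|\,n(k) = 2\sum_{k=1}^{k_{\mathrm{max}}} k\lfloor n_{\mathrm{max}}/k\rfloor \le 2 k_{\mathrm{max}} n_{\mathrm{max}}$, so every partial sum satisfies $|s_j| \le \Lambda$. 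Hence the $s_j$ assume at most $2\Lambda+1 \le 4 k_{\mathrm{max}} n_{\mathrm{max}} + 1$ distinct integer values, independently of the mode ordering.

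Next I would exhibit the explicit construction realizing this count. Identifying the virtual labels with $\mathbb{Z}_K$ for $K = 2\Lambda+1$, I take each local tensor to be the linear representation of addition: for physical value $\delta_k$ it maps the incoming label $s$ to the outgoing label $s+\delta_k$ with amplitude one, and vanishes otherwise. Because every partial sum stays inside the window $[-\Lambda,+\Lambda]$ for any ordering, integer addition never wraps around the cyclic group, so addition modulo $K$ faithfully reproduces integer addition; consequently each admissible configuration contributes exactly the product of ones along its unique path. Fixing the boundary labels $v_L = v_R = \ket{0_1}$ selects precisely the zero-total-transfer configurations and assigns each unit weight, which is exactly the equal-weight $\delta$-MPS. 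Its bond dimension is therefore $D_\delta = K \le 4 k_{\mathrm{max}} n_{\mathrm{max}} + 1$.

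The main obstacle I anticipate is the faithfulness of the $\mathbb{Z}_K$ representation: I must verify that no intermediate partial sum can leave the window $[-\Lambda,+\Lambda]$, so that the cyclic wraparound never merges two distinct partial sums into a single virtual label --- such a collision would superpose inequivalent interaction terms and destroy the equal-weight property. This is controlled entirely by the per-mode estimate $|k|\,n(k)\le n_{\mathrm{max}}$ together with the telescoping bound on $\Lambda$; establishing it uniformly over \emph{arbitrary} mode orderings, as the statement implicitly requires, is the only point needing care, the remaining bookkeeping being routine.
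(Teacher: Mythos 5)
Your proposal is correct and takes essentially the same route as the paper: both realize the $\delta$-MPS as a linear (finite-state-machine) representation of addition whose virtual label tracks the accumulated momentum transfer, with bond dimension equal to the number $K=2\Lambda+1$ of attainable partial sums, bounded via $|k|\lfloor n_{\mathrm{max}}/|k|\rfloor\le n_{\mathrm{max}}$ by $4k_{\mathrm{max}}n_{\mathrm{max}}+1$, uniformly in the mode ordering. The only cosmetic difference is that the paper works with addition modulo $K$ and uses commutativity of $\mathbb{Z}_K$ to show that any boundary choice $v=w=\ket{z}$ projects onto the zero-transfer sector, whereas you fix $v_L=v_R=\ket{0}$ and instead verify that partial sums never leave the window $[-\Lambda,+\Lambda]$, so the cyclic embedding is faithful --- two equivalent ways of settling the same point.
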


\begin{proof}
The task of the $\delta$-MPS is to constrain the vertex operator to those `momentum transfer configurations', which are compatible with global momentum conservation. 
The momentum transfer of each mode $k$ is encoded by the local Hilbert space $\Delta_k$ as defined in Eq.~\eqref{eq:def_delta_k}. 
We denote by $(\hat k_1,\dots,\hat k_l)$ with $l=2k_{\mathrm{max}}+1$ some ordering of the modes $(-k_{\mathrm{max}}, \dots, k_{\mathrm{max}})$. 
In particular, the hatted variables $\hat k_j$ do not need to be ordered monotonically in $j$. 
Then, we can explicitly construct the $\delta$-MPS with respect to that ordering as follows.
Let $\mathcal K$ be a Hilbert space of dimension $K$ with 
\begin{align}
    \begin{split}
        K-1 & =2\sum_{j=1}^l |\hat k_j n(\hat k_j)| = 4\sum_{{j'}=1}^{k_{\mathrm{max}}} j'\left\lfloor\frac{n_{\mathrm{max}}}{j'}\right\rfloor \\
        &\leq4\sum_{j'=1}^{k_{\mathrm{max}}} n_{\mathrm{max}} = 4k_{\mathrm{max}}n_{\mathrm{max}},
    \end{split}
\end{align}
which we will use for book-keeping the momentum transfer present in the current configuration.
To add some intuition in $K$ note the following. 
The $j$-th mode with momentum $\hat k_j$ participates in the momentum transfer by some 
\begin{align*}
    \begin{gathered}
        \delta_j \in \mathcal B_{j}, \\
        \mathcal B_j = \{-\hat k_j n(\hat k_j),-\hat k_j(n(\hat k_j)-1),\dots, +\hat k_jn(\hat k_j)\}.
    \end{gathered}
\end{align*}
The degree of freedom accounting for the global momentum transfer needs to encode all possible global momentum transfer values $\delta_g$, where the global momentum transfer for a specific configuration of local transfers is given by
\begin{align}
    \delta_g(\delta_1,\dots,\delta_l)=\sum_{j=1}^l \delta_j \,, \quad \delta_j \in \mathcal B_{j}.
    \label{eq:delta_g_values}
\end{align}
Note that for the extremal global momentum transfers it holds $\delta_g^{\mathrm{max}} = \sum_j |\hat k_j n(\hat k_j)|$ and similarly $\delta_g^{\mathrm{min}} = -\sum_j|\hat k_j n(\hat k_j)|$. 
Thus, $\delta_g$ can attain at most $K = \delta_g^{\mathrm{max}} - \delta_g^{\mathrm{min}} + 1$ different values and hence can be encoded by the Hilbert space $\mathcal K$. 
We label the corresponding canonical basis vectors by the set $[K]=\{1,\dots,K\}$. 

By construction, we associate the first canonical basis vector in $\mathcal K$ with the minimal global momentum transfer 
\begin{equation}
    \delta_g^{(1)} = \delta_g^{\mathrm{min}}=-\frac{K-1}{2}
\end{equation}
and similarly the $K$-th basis vector with the maximal global momentum transfer $\delta_g^{(K)}=\delta_g^{\mathrm{max}} = \frac{K-1}{2}$. 
Consequently, the $\frac{K+1}{2}$-th  canonical basis vector labels the global momentum conserving sector corresponding to $\delta_g^{\left(\!\frac{K+1}{2}\!\right)}=0$. 

Assume on site $j$ the accumulated momentum transfer of the modes $(1,\dots, j-1)$ is $\delta_{1,\dots,j-1}$ and the transfer on the $j$-th mode is $\delta_j$, then the accumulated momentum for the modes $(1,\dots, j)$ is equal to $\delta_{1,\dots,j} = \delta_{1,\dots,j-1} + \delta_j$. 
This motivates us to define the tensors 
\begin{align}
    \left[+^{(j)}\right] \in \{0,1\}^{[K]\times\mathcal B_j\times [K]} \;,\quad j=1,\dots, l
\end{align}
in terms of 
\begin{align}
    [+]\in\{0,1\}^{[K]\times[K]\times[K]}
\end{align}
as 
\begin{align}
    &{\left[+^{(j)}\right]}_{i,\alpha}^\beta=[+]_{
    \frac{K+1}{2}+i,\alpha}^\beta \;,  &&\alpha,\beta\in[K]\,,\; i\in\mathcal B_{j}\label{eq:def_plus1},\\[8pt]
    &[+]_{i,\alpha}^\beta = \delta_{\alpha+i,\beta}\;, &&\alpha,\beta, i\in[K]\,.
\end{align}
Here, we have  used the set theoretic notation $A^B=\{f:B\rightarrow A\}$ and identify each tensor in $\{0,1\}^{[K]\times\mathcal B_j\times[K]}$ as a map from the set of index labels $[K]\times\mathcal B_j\times[K]$ to its entries $\{0,1\}$. 

We can think of the $[+]$ tensor as a representation of $\mathbb{Z}_K$ on $\mathcal K$, such that $[+]_i : \ket{\alpha} \mapsto \ket{\alpha+i}$. 
Note that the additional $({K+1})/{2}$ in Eq.~\eqref{eq:def_plus1} is required in order to embed all $\mathcal B_j$'s into $[K]$ in a way consistent with the interpretation laid out in the paragraph after Eq.~\eqref{eq:delta_g_values}.

We are now ready to define the $\delta$-MPS with respect to the ordering $(\hat k_1,\dots,\hat k_l)$ and with respect to the momentum transfer configuration $(\delta_{1},\dots,\delta_{l})$, $\delta_j\in\mathcal{B}_{j}$, as 
\begin{align}
    \delta(\delta_{1},\dots,\delta_{l})=v^\dagger\cdot \left[+^{(1)}\right]_{\delta_1}\cdot\left[+^{(2)}\right]_{\delta_2}\cdots \left[+^{(l)}\right]_{\delta_l} \cdot w\,,
\end{align}
where $v$ and $w$ are the left and right boundary vectors that we are now going to fix. 
Since we defined the $[+]$ tensor by addition modulo $K$, any $v=w=\ket{z}$ with $z\in[K]$ projects on to the zero momentum transfer sector just as desired. 
To see this, assume $v=w=\ket{z}=[+]_z\ket{0}$ and in particular $v^\dagger=\bra{0}[+]_{K-z}$. Since $[+]$ is a representation of the commutative group $\mathbb Z_K$ it holds $[+]_a\cdot [+]_b=[+]_{a+b}=[+]_b\cdot [+]_a$ where again addition is modulo $K$. 
Thus we can commute $[+]_{K-z}$ through the matrix product and cancel it with its inverse $[+]_z$, i.e.,
\begin{align*}
    &v^\dagger\cdot \left[+^{(1)}\right]_{\delta_1}\cdot\left[+^{(2)}\right]_{\delta_2}\cdots \left[+^{(l)}\right]_{\delta_l} \cdot w \\
    &=\bra{0}[+]_{K-z}\cdot \left[+^{(1)}\right]_{\delta_1}\cdot\left[+^{(2)}\right]_{\delta_2}\cdots \left[+^{(l)}\right]_{\delta_l}  \cdot [+]_z\ket{0} \\
    &= \bra{0} \left[+^{(1)}\right]_{\delta_1}\cdot\left[+^{(2)}\right]_{\delta_2}\cdots \left[+^{(l)}\right]_{\delta_l} \ket{0}.
\end{align*}
Thus, we have shown that we can construct the $\delta$-MPS in terms of the $[+^{(j)}]$ tensors, which implies that we can construct the $\delta$-MPS with bond dimension 
\begin{equation}
    D_\delta= K \leq 4 k_{\mathrm{max}} n_{\mathrm{max}}+1. 
\end{equation}
\end{proof}

To conclude this section, we define $n= \max\left\{n_{\mathrm{max}}, n_{\mathrm{ZM}} \right\}$. 
Then, the complexity of each sweep of the algorithm is bounded by 
\begin{align}
    O(\chi^3k_{\mathrm{max}}^2 n^2+\chi^2k_{\mathrm{max}}^3 n^4)
\end{align}
arithmetic operations. 
Moreover, heuristically, we find that it suffices to chose $\chi = O(D) = O(k_{\mathrm{max}} n_{\mathrm{max}})$, which implies that one sweep can be performed with $O(k_{\mathrm{max}}^5 n^6)$ arithmetic operations. 
Finally, assuming $n_\mathrm{max} = k_\mathrm{max}$ with the implication that the largest mode can be at most occupied once and $n_{\mathrm{ZM}} = O(n_{\mathrm{max}})$, we obtain $O(k_\mathrm{max}^{11})$ arithmetic operations.


\section{Visualisation of quantum states}


\subsection{Wigner quasiprobability distribution}

The Wigner function is a quasiprobability distribution representing a quantum state in phase space. 
It encodes the same information as the wave function or more generally the density matrix of the state and provides an insightful visualisation allowing comparison with classical statistical ensembles which are characterised by probability distributions. 

For a system consisting of a single particle, the Wigner function of a density matrix $\rho$ is defined as 
\begin{equation}
    W(q,p)=\frac{1}{2\pi}\int_{-\infty}^{+\infty}\langle q-\tfrac{1}{2}s|\rho|q+\tfrac{1}{2}s\rangle\mathrm{e}^{\mathrm{i}ps}\,\mathrm{d}s\label{eq:Wigner_def}
\end{equation}
where $q,p$ are the generalised position and conjugate momentum, respectively. 
Integrating $(q,p)\mapsto W(q,p)$ over $q$ or $p$ gives the probability distribution of finding the particle at generalised momentum $p$ or generalised position $q$, respectively 
\begin{align*}
    \int_{-\infty}^{+\infty}W(q,p)\,\mathrm{d}q & =\langle p|\rho|p\rangle, \\
    \int_{-\infty}^{+\infty}W(q,p)\,\mathrm{d}p & =\langle q|\rho|q\rangle
\end{align*}
from which the normalisation condition 
\begin{equation}
    \int_{-\infty}^{+\infty}\int_{-\infty}^{+\infty}W(q,p)\,\mathrm{d}q\mathrm{d}p=\mathrm{Tr}(\rho)=1
\end{equation}
follows. 
Based on the above properties the Wigner function is analogous to the joint probability distribution in phase space of a classical particle in a statistical ensemble. 
Unlike the classical case it is not a strictly positive function and the presence of negative values in some phase space region is a signature of the quantum nature of the state.

The Wigner function can be expressed as the double Fourier transform of the expectation value in the state $\rho$ of the exponential operator $\mathrm{e}^{-\mathrm{i}u\hat{q} - \mathrm{i}v\hat{p}}$~\cite{Leonhardt1997}. 
Explicitly, from the above definition, it can be shown that 
\begin{equation}
    W(q,p)=\iint_{-\infty}^{+\infty}\mathrm{Tr}\left\{ \rho\mathrm{e}^{-\mathrm{i}u\hat{q}-\mathrm{i}v\hat{p}}\right\} \mathrm{e}^{\mathrm{i}uq+\mathrm{i}vp}\,\frac{\mathrm{d}u\mathrm{d}v}{(2\pi)^{2}}
    \label{eq:Wigner_computation_formula}.
\end{equation}
In a harmonic oscillator basis with frequency $\omega$ and ladder operators $a,a^{\dagger}$, the operator $\mathrm{e}^{-\mathrm{i}u\hat{q}-\mathrm{i}v\hat{p}}$ is nothing but the displacement operator 
\begin{equation}
    D(\alpha) := \mathrm{e}^{\alpha a^{\dagger}-\alpha^{*}a} 
    = \mathrm{e}^{-|\alpha|^2/2} \mathrm{e}^{\alpha a^{\dagger} } \mathrm{e}^{-\alpha^{*}a}
    \label{eq:displacement_op}
\end{equation}
with 
\begin{equation}
    \alpha=v\sqrt{\omega/2}-\mathrm{i}u/\sqrt{2\omega}\label{eq:Wigner_integr_var}.
\end{equation}

In the multi-mode case with generalised positions and momenta $\boldsymbol{q} = (q_{1},q_{2},\dots,q_{n})$ and $\boldsymbol{p} = (p_{1},p_{2},\dots,p_{n})$, respectively, the above definition generalises to
\begin{equation}
    W(\boldsymbol{q},\boldsymbol{p})=\frac{1}{2\pi}\int_{-\infty}^{+\infty}\langle\boldsymbol{q}-\tfrac{1}{2}\boldsymbol{s}|\rho|\boldsymbol{q}+\tfrac{1}{2}\boldsymbol{s}\rangle\mathrm{e}^{\mathrm{i}\boldsymbol{p}\cdot\boldsymbol{s}}\,\mathrm{d}\boldsymbol{s}.
\end{equation}
$W(\boldsymbol{q},\boldsymbol{p})$ is clearly the product of Wigner functions $W(q_{i},p_{i})$ of the reduced density matrices $\rho_{i}$ corresponding to each of the modes. 
In the present case of bosonic QFT, the Wigner function of mode $k$ corresponding to a state $\rho$ is 
\begin{align*}
    & W_{\rho}^{(k)}(\phi,\pi) \\
    & \quad = \iint_{-\infty}^{+\infty}\mathrm{Tr}\left\{ \rho_{k}\mathrm{e}^{-\mathrm{i}u\tilde{\Phi}_{k}-\mathrm{i}v\tilde{\Pi}_{k}}\right\} \mathrm{e}^{\mathrm{i}u\phi+\mathrm{i}v\pi}\,\frac{\mathrm{d}u\mathrm{d}v}{(2\pi)^{2}}
\end{align*}
and can be computed by evaluating the expectation value of the exponential operator for a suitably chosen range of values of the variables $u$ and $v$, and numerically computing the double Fourier transform.


\subsection{Full counting statistics}

The \emph{full counting statistics} (FCS) of an observable in a quantum state is the probability distribution of measurements of the observable in the given state. 
It provides complete information about the fluctuations of measured values beyond that of statistical moments like the expectation value and variance. 
The probability $P_{\rho}^{\hat{X}}(x)$ of obtaining the value $x$ in a measurement of observable $\hat{X}$ in state $\rho$ is 
\begin{equation}
    P_{\rho}^{\hat{X}}(x)=\sum_{|x'\rangle}\langle x'|\rho|x'\rangle\delta(x'-x)
\end{equation}
where the sum runs over all eigenvectors $|x'\rangle$ of the operator $\hat{X}$ with corresponding eigenvalue $x'$. 
We can formally write
\begin{equation}
    P_{\rho}^{\hat{X}}(x)=\mathrm{Tr}\left\{ \rho\hat{\delta}(\hat{X}-x)\right\} 
\end{equation}
where $\hat{\delta}(\hat{X}-x)$ is an operator valued version of the $\delta$-function. 
In practice, computing the probability distribution is easier (especially for observables whose spectrum is continuous) if we first compute the generating function, which is the Fourier transform of the probability distribution. 
It can be easily shown that 
\begin{equation}
    P_{\rho}^{\hat{X}}(x)=\int_{-\infty}^{+\infty}\mathrm{Tr}\left\{ \rho\mathrm{e}^{\mathrm{i}\hat{X}s}\right\} \mathrm{e}^{-\mathrm{i}xs}\,\frac{\mathrm{d}s}{2\pi}.
\end{equation}
For example, considering as observable the local bosonic field $\Phi$ in the mS model, the FCS in a state $\rho$ is 
\begin{equation}
    P_{\rho}^{\Phi}(\phi)=\int_{-\infty}^{+\infty}\mathrm{Tr}\left\{ \rho\mathrm{e}^{\mathrm{i}\Phi(0)s}\right\} \mathrm{e}^{-\mathrm{i}\phi s}\,\frac{\mathrm{d}s}{2\pi},
\end{equation}
and can be computed by evaluating expectation values of an imaginary exponential operator for different values of the exponent coefficient $s$ and performing the numerical Fourier transform of the resulting function. 
This computation is similar to the Wigner function with the main difference being that the exponential operator involves all modes instead of only one.

\section{Analysis of numerical data and cutoff extrapolation} 

For the plots presented in the main text, we computed MPS representations of ground and excited states and quench dynamics states for different parameter values and cutoffs $k_\mathrm{max}$. 
From these we computed eigenstate energies, expectation values of local observables, momentum space entanglement entropies, single-mode Wigner functions and FCS of local observables. 
Note that in the momentum space decomposition used in the present method, observables that are local in coordinate space correspond to operators that act on all momentum modes, therefore their computation involves contractions on all MPS sites. 
Nevertheless, for all practical purposes the operators needed are either sums or products of single mode operators over all modes. 
The entanglement entropy of a subsystem (subset of modes) with its complement is generally computed by contracting the physical indices of the density matrix corresponding to the complement to obtain the reduced density matrix of the subsystem and computing its singular values. 
When the partitioning of the system is realised by cutting a single MPS bond, then the entanglement entropy can be computed more easily as the singular values at the cut. 

For the computation of energy gaps we compute expectation values of the Hamiltonian in the MPS representations of the ground and first excited state using DMRG~\cite{Stoudenmire2012} for a sufficiently large system size (\mbox{$L=15$} for the sG model, \mbox{$L = 100$} for the mS model) such that finite size effects are negligible. 
On the other hand, corrections due to finite $k_\mathrm{max}$ are significant and in order to obtain accurate estimates of the energy gap we apply extrapolation to the limit $k_\mathrm{max} \to \infty$.

For the extrapolation of the momentum cutoff we do the following procedure: 
Once the maximum value of $k_\mathrm{max}$ that is accessible for a specific computation has been specified and $n(k)$ has been set according to the empirical rule (\ref{n(k)}), we do simulations at lower $k_\mathrm{max}$ keeping the occupation cutoffs $n(k)$ fixed. 
The results obtained in this way are then fitted to a suitable scaling function of $k_\mathrm{max}$ and extrapolated to infinite $k_\mathrm{max}$.

RG theory provides predictions for the scaling of local observables with the UV cutoff, however there are different realisations of UV truncation in RG theory and they are not identical to the one used here. Nevertheless, it is generally expected that the precise truncation method is irrelevant. 
For the energy difference between low excited states corrections are expected to decay proportionally to the inverse of the cutoff~\cite{Konik_NRG} and our numerical data are consistent with this scaling rule. 
We therefore fit the finite cutoff data to a $1/k_\mathrm{max}$ scaling function and obtain our estimates from the value of the fit at infinite $k_\mathrm{max}$. 

The estimated values of the energy gap of the sG model at the FF point and the critical point of the mS model were computed using the above method. 
For the estimation of the critical point of the mS model, in particular, a further step was necessary. 
Given that finite size effects become important close to the critical point and at the same time truncation effects worsen, we found it best to locate the critical point by extrapolating the $\Delta_\text{mS}(m)$ data from lower values of $m$ where accuracy is satisfactory. 
More specifically, we fit the data points $\Delta_\text{mS}$ for $m$ in the interval $[0.1,0.25]$ to a linear function and estimate the critical mass $m_\mathrm{c}$ as the value of $m$ at which the fitted line crosses the horizontal axis. 
Alternative polynomial fit functions and variations of the fitting interval edges do not modify the final results. 
The extrapolation over $m$ and over $k_\mathrm{max}$ can be performed in either order and the final results are equal. 
The estimated error of $m_\mathrm{c}$ corresponds to the combination of DMRG truncation errors and extrapolation fitting errors. 

Finite-size scaling theory for lattice simulations~\cite{FSS} provides precise predictions for the scaling of physical observables with the system size $L$ from which it is possible to derive information about the thermodynamic limit, including details about the model's critical behaviour (e.g., critical exponents). 
We expect that using a momentum-space formulation of finite-size scaling theory would significantly improve the accuracy of our estimates.

\end{document}